\newtheorem{thm}{Theorem}[section]
\newtheorem{lem}[thm]{Lemma}
\newtheorem{proposition}[thm]{Proposition}
\theoremstyle{definition}
\theoremstyle{remark}
\newtheorem{remark}[thm]{Remark}
\numberwithin{thm}{section}
\newcommand{\R}{{\mathord{\mathbb R}}}
\newcommand{\N}{{\mathord{\mathbb N}}}
\newcommand{\C}{{\mathord{\mathbb C}}}
\newcommand{\Z}{{\mathord{\mathbb Z}}}
\DeclareMathOperator{\I}{\mathbbm{1}}
\DeclareMathOperator{\pf}{pf}
\DeclareMathOperator{\tr}{tr}
\DeclareMathOperator{\Tr}{Tr}
\DeclareMathOperator{\diag}{diag}
\begin{document}

\title[Area law for the disordered XY chain]{A uniform area law for the entanglement of eigenstates in the disordered XY chain}

\author[H. Abdul-Rahman]{Houssam Abdul-Rahman$^1$}
\address{$^1$ Department of Mathematics\\
University of Alabama at Birmingham\\
Birmingham, AL 35294 USA}
\email{houssam@uab.edu}

\author[G. Stolz]{G\"unter Stolz$^1$}
\thanks{G.\ S.\ was supported in part by NSF grant DMS-1069320.}
\email{stolz@uab.edu}

\date{}

\maketitle




\begin{abstract}
We consider the isotropic or anisotropic XY spin chain in the presence of a transversal random magnetic field, with parameters given by random variables. It is shown that eigenfunction correlator localization of the corresponding effective one-particle Hamiltonian implies a uniform area law bound in expectation for the bipartite entanglement entropy of all eigenstates of the XY chain, i.e.\ a form of many-body localization at all energies. Here entanglement with respect to arbitrary connected subchains of the chain can be considered. Applications where the required eigenfunction correlator bounds are known include the isotropic XY chain in random field as well as the anisotropic chain in strong or strongly disordered random field.
\end{abstract}

\vspace{.5cm}

\subjclass{MSC: 81P40, 82B44}

\vspace{.3cm}

\keywords{Keywords: XY spin chain, entanglement, area law}



\allowdisplaybreaks
\section{Introduction}

\subsection{Motivation: Towards Many-Body Localization}

We are interested here in a better understanding of mathematical characterizations of many-body localization (MBL) in interacting quantum systems. This phenomenon has recently received strong attention in theoretical physics and quantum information theory, see e.g. \cite{Bardarsonetal, Baskoetal, BauerNayak, BurrellOsborne,Canovietal, Friesdorfetal, Gornyietal, HuseOganesyan, OganesyanHuse, PalHuse, Serbynetal2014, Serbynetal2013a, Serbynetal2013b, VoskAltman, Znidaricetal} and references therein. It is generally described as the absence of {\it thermalization} or {\it self-equilibration} in a quantum many-body system, often due to the presence of disorder. For a detailed discussion of the current understanding of thermalization and many-body localization in the physics literature see also the recent survey \cite{GogolinEisert}, including the extensive list of references provided there. It is made clear in these works that there is no complete consensus yet on what physically constitutes thermalization and MBL, thus leaving a multitude of questions for further investigation and clarification.

We will focus here on some of the criteria which by now are well accepted to be necessary characteristics of MBL and on studying these criteria for relatively simple models where they can be proven rigorously. Generally, and quite roughly, these criteria say that an interacting system, in suitable regimes such as weak interaction or large disorder, has properties similar to those of a non-interacting many-body system. The eigenstates of the latter are product states and thus have vanishing entanglement and spatial correlations. Also, its dynamics are trivial with no information propagating between particles. For an interacting system in the MBL phase one thus expects rapid decay of correlations and small entanglement of eigenstates, as well as absence of information transport (i.e.\ no or slow propagation of particle group waves). Note that this is a true many-body concept and to be distinguished from classical Anderson localization for a single particle, say, in a disordered environment, where `localization' refers to the single particle configuration space. Significant differences between the many-body localized phase and single particle localization have been pointed out, e.g., in \cite{Bardarsonetal, Serbynetal2014}.

Let us stress here that the term MBL should generally be reserved for properties which hold uniformly in the number of particles in the system (e.g.\ in the sense that relevant constants are bounded in the particle number). In this sense, many-body localization is to be distinguished from few-body localization, such as the known rigorous results for the $N$-particle Anderson model \cite{ChulaevskySuhov1, ChulaevskySuhov2, AizenmanWarzel, KleinNguyen}, which do not yet allow uniform control in the number of electrons. In particular, results expected by physicists for the many-body Anderson model, such as MBL at low electron density or weak interaction strength, e.g.\ \cite{Baskoetal,Gornyietal}, can not yet be shown rigorously.

Rigorous mathematical results on localization properties of disordered many-body systems are so far essentially restricted to two models: Disordered harmonic oscillator systems \cite{NSS1, NSS2} and the XY spin chain in random field \cite{KleinPerez, HamzaSimsStolz, SimsWarzel, PasturSlavin}. These models are equivalent to free Boson systems and free Fermion systems, respectively, and thus can be studied in terms of an effective one-particle Hamiltonian. As a consequence, it is possible to deduce results on MBL from known localization properties of one-particle Hamiltonians such as the Anderson model. Of course, a long term goal must be to develop methods which allow to go beyond these simple models. In particular, an important challenge is to develop mathematical methods to study the disordered XXZ or Heisenberg chain, which reduce to the physically more interesting situation of {\it interacting} Fermion systems, see e.g.\ \cite{Bardarsonetal, BauerNayak, OganesyanHuse, PalHuse, Serbynetal2013a} for numerical results. Some progress towards understanding MBL for larger classes of spin systems in the presence or strong disorder has been made in \cite{Imbrie}, which, however, still needs to make an unproven assumption to limit the amount of level attraction in the system. Mathematically, describing the phenomenon of many-body localization and fully proving it for important classes of physical examples is a wide open field (even wider than for physics).

Our more modest goal is here is to contribute to a more complete understanding of MBL properties of the two models indicated above, harmonic oscillators and the XY chain.

For oscillator systems the reduction to an effective one-particle Hamiltonian is rather straightforward. In particular, the reduction works in any dimension and does not affect locality properties of the system. This has allowed to verify a quite complete list of MBL properties for disordered oscillator systems, including a zero-velocity Lieb-Robinson bound on information transport as well as exponential decay of correlations and area-law-type entanglement bounds for ground and thermal states \cite{NSS1, NSS2} (in each case requiring disorder averaging).

The reduction of the XY model via the Jordan-Wigner transform to a free Fermion system is only possible for a one-dimensional chain of spins and, as an additional difficulty, introduces non-locality. Therefore rigorously known MBL properties for the XY model are more limited. A first contribution was made in \cite{KleinPerez} which established exponential decay of certain ground state correlations for the isotropic XY chain in random exterior field, using localization properties of the Green function of the effective Hamiltonian, in this case the Anderson model, proven via multiscale analysis. More general results have recently been obtained by Sims and Warzel \cite{SimsWarzel}. Using localization of eigenfunction correlators of the underlying one-particle Hamiltonian, these authors find exponential decay of stationary as well as time-dependent correlations for larger classes of states, including general eigenstates as well as thermal states, for systems such as the XY chain which can be mapped to free Fermions.

Localization of eigenfunction correlators is also the key property of the effective Hamiltonian used in \cite{HamzaSimsStolz} to show a zero-velocity Lieb-Robinson bound for the XY chain in random field, again after disorder averaging. An argument in \cite{HamzaSimsStolz}, valid for a very general class of quantum spin systems shows that this implies exponential decay of ground state correlations up to a logarithmic correction in the size of the ground state gap (the more specific arguments used in \cite{SimsWarzel} show that no such correction is required for the random XY chain). This in turn, by another general observation, leads to an area law for the ground state entanglement \cite{BrandHor13, BrandHor15}. Note, however, that \cite{BrandHor13, BrandHor15} consider deterministic systems and that some further analysis would be required to show that the relation ``exponential decay of correlations implies area law'' carries over to the disorder averaged quantities.

Here we prove an area law for the disorder averaged entanglement of {\it all} eigenstates of the XY chain in random field. In fact, we state this as a conditional result: If the corresponding effective one-particle Hamiltonian has localized eigenfunction correlators, then the many-body eigenstates satisfy an area law, uniform in energy (i.e.\ in the labeling of eigenstates), see Section~\ref{sec:mainresult} for exact statements. Applications of this result, see Section~\ref{sec:Applications}, include the isotropic XY chain but also some anisotropic cases.

To our knowledge this is the first result on many-body localization for eigenstates of a disordered spin chain which goes beyond the ground state, the latter physically corresponding to the case of zero temperature. In fact, the uniform area law for all eigenstates found here corresponds to the infinite temperature limit where all eigenstates are equally probable. That the validity of an area law for all (or at least a large fraction of all) eigenstates is a necessary characteristic of the MBL phase was recently stressed by Bauer and Nayak in \cite{BauerNayak}. These authors provide a formal definition of the MBL phase, reflecting the idea of Fock space localization proposed in \cite{Baskoetal}, and argue that an area law for the entanglement of eigenstates should generally follow as a consequence. It is natural to expect that localization at infinite temperature implies localization at finite temperature. However, our methods (in particular the use of the von Neumann entropy of reduced pure states to measure entanglement) do not directly extend to give useful entanglement bounds for mixed Gibbs states.

Our result is closely related to recent work by Pastur and Slavin \cite{PasturSlavin} who establish an area law for the ground state of disordered free Fermion systems. In fact, by mapping the XY chain to a free Fermion system via the Jordan-Wigner transform, we use the argument from \cite{PasturSlavin} as one of the key ingredients to our proof. We extend this argument to general eigenstates and, to include the anisotropic case, also consider more general Fermion systems which do not conserve the particle number, leading to random block operators as effective Hamiltonians. It may also be worth noting that we consider entanglement with respect to arbitrary subchains $\Lambda_1$ of the chain $\Lambda = \{1,\ldots,n\}$ in Theorem~\ref{mainthm} below (and not just left or right ends of $\Lambda$). This requires to work with a suitably chosen local version of the Jordan-Wigner transform, introduced in Section~\ref{corredstates} below.

\vspace{.3cm}

\noindent {\bf Acknowledgement:} It is our great pleasure to acknowledge many helpful discussions with Bruno Nachtergaele and Robert Sims on the contents of this work. G.\ S.\ would also like to thank the Isaac Newton Institute for Mathematical Sciences, Cambridge, for support and hospitality during the program {\it Periodic and Ergodic Spectral Problems} in Spring 2015 where part of the work on this paper was done.



\subsection{The Main Result} \label{sec:mainresult}

We consider the anisotropic XY spin chain in transversal magnetic field, given by the self-adjoint Hamiltonian
\begin{equation} \label{eq:anisoxychain}
H := - \sum_{j=1}^{n-1} \mu_j [ (1+\gamma_j) \sigma_j^X \sigma_{j+1}^X + (1-\gamma_j) \sigma_j^Y \sigma_{j+1}^Y] - \sum_{j=1}^n \nu_j \sigma_j^Z
\end{equation}
in ${\mathcal H} = \bigotimes_{j\in \Lambda} \C^2$, where $\Lambda=\{1,2,\ldots, n\}$ for an arbitrary positive integer $n$. By $\sigma_j^X$, $\sigma_j^Y$ and $\sigma_j^Z$ we denote the standard Pauli matrices acting on the $j$-th component of the tensor product.

The parameters $\mu_j$, $\gamma_j$ and $\nu_j$ describe the interaction strength, anisotropy and field strength, respectively, and we will think of them as the first $n$ components of sequences of real-valued random variables indexed by $j\in \N$. While stronger assumptions will be needed in applications, for our main result we only will assume their uniform boundedness, i.e.\ that there exists $C<\infty$ such that
\begin{equation} \label{assump}
\sup_{j\in \N}\; \left( |\mu_j| + |\gamma_j|+ |\nu_j| \right) \le C \quad \mbox{almost surely}.
\end{equation}

We will also assume that the disorder guarantees that
\begin{equation} \label{simple}
H \;\;\mbox{almost surely has simple spectrum},
\end{equation}
in the sense that all its eigenvalues are non-degenerate. In the applications discussed in Section~\ref{sec:Applications} below this will follow from an argument provided in Appendix~\ref{App:Non-degeneracy}.

As known since the work of Lieb, Schultz and Mattis \cite{LSM}, the XY chain Hamiltonian $H$ can be mapped to a free Fermion system on ${\mathcal H}$ by the Jordan-Wigner transform. As a result, the explicit diagonalization of $H$  reduces to the diagonalization of an effective one-particle Hamiltonian which can be written in form of the symmetric $2\times 2$-block Jacobi matrix
\begin{equation} \label{effHam}
M := \left( \begin{array}{cccc} -\nu_1 \sigma^Z & \mu_1 S(\gamma_1) & & \\ \mu_1 S(\gamma_1)^t & -\nu_2 \sigma^Z & \ddots & \\ & \ddots & \ddots & \mu_{n-1} S(\gamma_{n-1}) \\ & & \mu_{n-1} S(\gamma_{n-1})^t & -\nu_n \sigma^Z \end{array} \right),
\end{equation}
where $\sigma^Z = \begin{pmatrix} 1 & 0 \\ 0 & -1 \end{pmatrix}$  is the third Pauli matrix and
\begin{equation}
S(\gamma) = \begin{pmatrix} 1 & \gamma \\ -\gamma & -1 \end{pmatrix}.
\end{equation}
For some more details on the reduction of $H$ to $M$ see Section~\ref{sec:JordanWigner} below.

Our main result says that a suitable form of localization of the effective one-particle Hamiltonian $M$ implies a uniform area law for the bipartite entanglement entropy of all eigenstates of the $n$-spin Hamiltonian $H$. To introduce the latter, consider an arbitrary subinterval $\Lambda_1 = \{r, \ldots, r+\ell-1\}$ of $\Lambda$ and the bipartite decomposition ${\mathcal H} = {\mathcal H}_1 \otimes {\mathcal H}_2$, where
\begin{equation} \label{bipartite}
 {\mathcal H}_1 = \bigotimes_{j\in \Lambda_1} \C^2 \quad \mbox{and} \quad {\mathcal H}_2 = \bigotimes_{j\in \Lambda \setminus \Lambda_1} \C^2.
\end{equation}
For a pure state $\rho$ in ${\mathcal H}$ the entanglement entropy $\mathcal{E}(\rho)$ with respect to the subsystem $\Lambda_1$ is given by the von Neumann entropy of the reduced state $\rho_1 = \Tr_{{\mathcal H}_2} \rho$,
\begin{equation}
\mathcal{E}(\rho) = \mathcal{S}(\rho_1) := - \Tr g(\rho_1),
\end{equation}
where
\begin{equation}
g(x) = \left\{ \begin{array}{ll} x \log x, & 0<x \le 1, \\ 0, & x=0. \end{array} \right.
\end{equation}
We choose $\log$ to denote the natural logarithm, as opposed to the dyadic logarithm used in this context in the information theory literature. The distinction is irrelevant for our work, as we will not keep track of universal constants.

With $\mathbb{E}(\cdot)$ denoting the disorder average, we will prove

\begin{thm} \label{mainthm}
Assume (\ref{assump}) and (\ref{simple}), as well as the existence of $\beta >2$ and $C<\infty$ such that
\begin{equation} \label{ecloc}
\mathbb{E} \left( \sup_{|g|\le 1} \|g(M)_{jk} \| \right) \le \frac{C}{1+|j-k|^{\beta}}
\end{equation}
for all $n\in \N$ and $1\le j,k \le n$.

Then there exists $\tilde{C} <\infty$ such that
\begin{equation} \label{arealaw}
\mathbb{E} \left( \sup_{\psi} \mathcal{E}(\rho_{\psi}) \right) \le \tilde{C}
\end{equation}
for all $n$, $r$ and $\ell$ with $1\le r \le r+\ell-1 \le n$. In (\ref{arealaw})  the supremum is taken over all normalized eigenfunctions $\psi$ of $H$ and $\rho_{\psi} = |\psi \rangle \langle \psi|$ is the orthogonal rank-one projection onto $\C \psi$.
\end{thm}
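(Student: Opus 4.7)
The plan is to reduce the area-law bound (\ref{arealaw}) to a Schatten $1$-norm estimate on the off-diagonal block of a spectral projection of the effective one-particle operator $M$, and then to control this Schatten norm by summing the eigenfunction correlator bound (\ref{ecloc}) across the boundary of $\Lambda_1$. The whole argument must be carried out uniformly in the choice of eigenstate, not just at the ground state, and this uniformity is the entire novelty over the ground-state analysis in \cite{PasturSlavin}.

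First I would perform a Jordan-Wigner transform adapted to the bipartition $\Lambda = \Lambda_1 \sqcup (\Lambda \setminus \Lambda_1)$, ordering the chain so that both $\Lambda_1$ and $\Lambda \setminus \Lambda_1$ are Jordan-Wigner-consecutive; this is the role of the local ordering to be introduced in Section~\ref{corredstates}. With such an ordering the nonlocal string factors are supported entirely in one of the two subsystems and drop out under the partial trace $\Tr_{\mathcal{H}_2}$. After the transform $H$ becomes a quadratic form in Fermion creation and annihilation operators whose one-particle description is the $2 \times 2$ block Jacobi matrix $M$. By (\ref{simple}) each eigenstate $\psi$ of $H$ corresponds to a generalized Slater determinant associated to a unique subset $A \subset \sigma(M)$, and all its Fermionic correlators are determined by Wick's rule from the one-particle covariance $P := \chi_A(M)$, a spectral projection of $M$.

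Next I would invoke the block version of Peschel's formula for reduced quasi-free states: since $\rho_1 = \Tr_{\mathcal{H}_2}\rho_\psi$ is again quasi-free, its von Neumann entropy equals $\Tr h(G)$, where $G = \pi_1 P \pi_1$ is $P$ restricted to the sites of $\Lambda_1$, $\pi_1$ is the corresponding block projection, and $h(x) = -x\log x - (1-x)\log(1-x)$. Since $h(0)=h(1)=0$ and $h$ is smooth on $[0,1]$, the elementary inequality $h(x) \le c_1 \sqrt{x(1-x)}$ holds, and functional calculus yields $\Tr h(G) \le c_1 \Tr\sqrt{G(1-G)}$. Using $P^2 = P$ one verifies the algebraic identity $G(1-G) = \pi_1 P (1-\pi_1) P \pi_1 = B^*B$ for $B := (1-\pi_1) P \pi_1$, so $\Tr\sqrt{G(1-G)} = \|B\|_1$. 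Combining the standard bound $\|B\|_1 \le c_2 \sum_{j,k} \|B_{jk}\|$ (applied block-by-block, with $c_2$ depending only on the fixed block size $2$) with (\ref{ecloc}) gives
\[
\mathbb{E}\bigl(\mathcal{E}(\rho_\psi)\bigr) \;\le\; c_3 \sum_{j\in\Lambda_1,\,k\in\Lambda\setminus\Lambda_1} \mathbb{E}\bigl(\|P(j,k)\|\bigr) \;\le\; c_4 \sum_{j\in\Lambda_1,\,k\in\Lambda\setminus\Lambda_1} \frac{1}{(1+|j-k|)^\beta},
\]
where the crucial point is that $P = \chi_A(M)$ has $|\chi_A|\le 1$, so that (\ref{ecloc}) applies uniformly over all occupation sets $A$ and hence uniformly in the eigenstate $\psi$. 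Because $\Lambda_1$ is a connected subinterval, the last sum collapses to at most two boundary copies of $\sum_{d_1,d_2\ge 1}(d_1+d_2)^{-\beta}$, which is finite precisely when $\beta > 2$; this produces the desired $n,r,\ell,\psi$-independent constant $\tilde C$.

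I expect the main technical obstacles to lie in the first two reductions. Arranging the Jordan-Wigner transform so that tracing out $\mathcal{H}_2$ preserves the quasi-free structure on $\mathcal{H}_1$ when $\Lambda_1$ is an interior subinterval is delicate because the string factors are intrinsically nonlocal and must be shown to cancel after partial tracing. Extending Peschel's reduced-state formula to the $2\times 2$ block setting is the second delicate point: in the anisotropic regime $\gamma_j \neq 0$ the eigenstates are not particle-number eigenstates, so one must track both the normal correlators $\langle c_j^* c_k\rangle$ and the anomalous correlators $\langle c_j c_k\rangle$ simultaneously through a Majorana-type covariance matrix, which is precisely what the $2\times 2$ block structure of $M$ in (\ref{effHam}) encodes. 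Once these ingredients are in place the remaining estimates reduce to the Pastur-Slavin-style bound above, now uniform in the occupation set $A$ thanks to the structural fact that every eigenstate covariance is of the form $\chi_A(M)$ with $|\chi_A| \le 1$.
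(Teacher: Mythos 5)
Your proposal is correct and follows essentially the same route as the paper: local Jordan--Wigner operators adapted to $\Lambda_1$, correlation matrices of eigenstates identified with spectral projections $\chi_{\Delta_\alpha}(M)$ (uniformly bounded by $1$, hence covered by (\ref{ecloc}) for every occupation set), the entropy formula for reduced quasi-free states, the bound $-x\log x-(1-x)\log(1-x)\le 2\log 2\sqrt{x(1-x)}$, and summation of the power-law decay across the boundary of $\Lambda_1$, convergent for $\beta>2$. The only deviation is in one linear-algebra step: where you write $\tr\sqrt{G(\I-G)}$ as the trace norm of the off-diagonal block $(\I-\pi_1)P\pi_1$ and bound it entrywise, the paper instead applies the Peierls--Bogoliubov inequality $\tr f(A)\ge \sum_j f(A_{jj})$ with $f(x)=-\sqrt{x}$ and computes the diagonal of $\Gamma_1(\I-\Gamma_1)$ via $\Gamma^2=\Gamma$; both yield the same bound $C\sum_{j\in\Lambda_1}\sum_{k\in\Lambda\setminus\Lambda_1}\|\Gamma_{jk}\|$.
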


In (\ref{ecloc}), $g: \R\to \C$ may be any function such that $|g(x)|\le 1$ uniformly, and $g(M)$ is defined via the functional calculus for symmetric matrices. Using the block matrix notation from (\ref{effHam}) above, the matrix elements $g(M)_{jk}$ are understood as $2\times 2$-matrices. While any norm $\|\cdot\|$ on the $2\times 2$-matrices could be used in (\ref{ecloc}), we will work with the Euclidean matrix norm for the sake of definiteness.

Bounds of the form (\ref{ecloc}) are generally referred to as {\it localization of eigenfunction correlators}. Their use in single particle localization theory originates from \cite{Aizenman94}, see also \cite{Stolz} for an introductory survey.

In particular, considering the functions $g_t(x) = e^{-itx}$, $t\in \R$, the bound (\ref{ecloc}) includes dynamical localization, uniform in time, for solutions of the Schr\"odinger equation $\psi'(t) = -iM \psi(t)$. We note that requiring only power law decay for some $\beta>2$ in (\ref{ecloc}) is a relatively weak form of eigencorrelator localization. In applications where (\ref{ecloc}) can be verified, one usually has much faster decay in $|j-k|$, see Section~\ref{sec:Applications} below.

The ``area law'' (\ref{arealaw}), giving an upper bound for the entanglement proportional to the surface area of the subsystem $\Lambda_1$ (as opposed to the elementary bound in terms of its volume), is uniform not only in the size of the system $\Lambda$ and subsystem $\Lambda_1$, but also applies uniformly to {\it all} eigenstates of $H$. It thus yields a form of many-body localization for $H$ at {\it arbitrary energy}.

Note that by a general result due to Hastings \cite{Hastings}, an area law holds for the ground state of one-dimensional spin systems with a uniform (in volume) ground state energy gap. Not only is our work not limited to the ground state, but we also do not need any explicit assumption of the size of spectral gaps. Instead, we exploit single particle localization of the effective Hamiltonian $M$ as the mechanism to prove an area law. Of course, bounds on level statistics can be considered as being an implicit part of the underlying single particle theory. A price we have to pay for not having deterministic gaps between energy levels is the need to include disorder averages in (\ref{ecloc}) and (\ref{arealaw}).

\subsection{Applications} \label{sec:Applications}

Here we discuss some concrete examples, i.e.\ assumptions on the random variables $\mu_j$, $\gamma_j$ and $\nu_j$ which guarantee (\ref{assump}), (\ref{simple}) and, most importantly, one-particle localization of $M$ in the form (\ref{ecloc}).

We will focus on the case of random magnetic field, where the corresponding one-particle localization properties are best understood, i.e.\ we choose $\mu_j = \mu \in \R \setminus \{0\}$ and $\gamma_j = \gamma \in \R$ to be constant, while the $\nu_j$ are i.i.d.\ random variables with compactly supported distribution $\rho$. Thus (\ref{assump}) holds. Assuming, moreover, that $\rho$ is absolutely continuous $d\rho(\nu) = h(\nu)d\nu$, it follows from Proposition~\ref{prop:nondeg} in Appendix~\ref{App:Non-degeneracy} that (\ref{simple}) is satisfied as well.

The localization property (\ref{ecloc}) is known for the following cases.

(i) Let $\gamma=0$, so that
\begin{equation}
H= -\mu \sum_j (\sigma_j^X \sigma_{j+1}^X + \sigma_j^Y \sigma_{j+1}^Y) - \sum_j \nu_j \sigma_j^Z
\end{equation}
is the isotropic XY chain in random field. Then all $2\times 2$-matrix-entries of $M$ are diagonal and thus $M$ decomposes into $A \oplus (-A)$, where
\begin{equation}
A = \left( \begin{array}{cccc} -\nu_1 & \mu & & \\ \mu & \ddots & \ddots & \\ & \ddots & \ddots & \mu \\ & & \mu & -\nu_n \end{array} \right)
\end{equation}
is the Anderson model on the finite interval $\Lambda$. If the density $h$ is bounded and compactly supported, then it is known that, for some $C<\infty$ and $\eta>0$,
\begin{equation}
\mathbb{E} \left(\sup_{|g|\le 1} |g(A)_{jk}| \right) \le C e^{-\eta |j-k|},
\end{equation}
see e.g.\ \cite{Stolz}. This readily implies (\ref{ecloc}) with exponential decay in $|j-k|$.

(ii) For the anisotropic case $\gamma \not=0$, which does not reduce to the Anderson model, localization properties of one-particle Hamiltonians given by block Jacobi matrices of the form (\ref{effHam}) have been studied more recently.

A result by Elgart, Shamis and Sodin \cite{ESS} covers the large disorder case: If the i.i.d.\ random variables $\nu_j^{(0)}$, $j\in \N$, have bounded compactly supported density, and if $\nu_j = \lambda \nu_j^{(0)}$ for $\lambda>0$ sufficiently large, then (\ref{ecloc}) holds with exponential decay in $|j-k|$. This is done in \cite{ESS} through an adaptation of the fractional moments method to a class of random block operators which includes our model (\ref{effHam}).

If the magnetic field is strong enough to create a spectral gap for $M$ around energy $E=0$ (uniform in the volume $n$ and the disorder), then we can apply a result from \cite{ChapmanStolz}: Suppose that for some $C>|\mu|$ it holds that either $\nu_j \ge C$ for all $j\in \N$ or $\nu_j \le -C$ for all $j\in \N$, resulting in the spectral gap $(-(C-|\mu|), C-|\mu|)$ for $M$. Then for every $\xi<1$ there is $C=C(\xi)$ and $\eta = \eta(\xi)>0$ such that
\begin{equation} \label{subexp}
\mathbb{E} \left( \sup_{|g|\le 1} \|g(M)_{jk} \| \right) \le C e^{-\eta |j-k|^{\xi}}.
\end{equation}
This is essentially what is shown in the proof of Theorem~7.2 of \cite{ChapmanStolz}, while the results there are only stated in terms of dynamical localization for $M$, i.e.\ for the functions $g_t(x) = e^{-itx}$ (but the argument covers general $|g|\le 1$).

As discussed in \cite{ChapmanStolz}, it remains an open problem if a bound such as (\ref{subexp}) (or at least (\ref{ecloc})) can be proven for $M$ without the assumption of a zero-energy gap. A problem arises from the fact that the transfer matrix group (F\"urstenberg group) associated with $M$ at $E=0$ is not irreducible, so that many of the available dynamical systems tools for proving one-dimensional localization don't apply.

\subsection{Notes on Content and Related Work}

The rest of this paper is devoted to the proof of Theorem~\ref{mainthm}. As in most works on the XY chain, the starting point is the representation of the XY Hamiltonian $H$, via the Jordan-Wigner transform, as a quadratic form in a set of Fermionic operators, see Section~\ref{sec:JordanWigner}. A key tool  is that eigenstates of the latter satisfy Wick's rule (sometimes also referred to as being {\it quasi free} or {\it gaussian}) and thus are fully determined by their correlation matrices with respect to the Jordan-Wigner Fermionic operators.

More specifically, the crucial formula for determining the bipartite entanglement entropy of eigenstates is the trace identity (\ref{redtocormat}) in Lemma~\ref{lem:quasifree} below.  We will use it for the case where the state $\rho$ in (\ref{redtocormat}) is the reduction of an eigenstate of $H$ to the subsystem $\mathcal{H}_1$ (which will ultimately be justified in Lemma~\ref{thm:Wicks:rho1}). Thus the left hand side of (\ref{redtocormat}) becomes the entanglement entropy of an eigenstate. We thus need to calculate the correlation matrix on the right hand side of (\ref{redtocormat}), which will be accomplished in Lemma~\ref{rhoCcorrelation} and Theorem~\ref{thm:EErule}: Lemma~\ref{rhoCcorrelation} provides the central connection between the many-body Hamiltonian $H$ and the effective one-particle Hamiltonian $M$, saying that the correlation matrices of eigenstates of $H$ are given by spectral projections of $M$. Moreover, the proof of Theorem~\ref{thm:EErule} shows that correlation matrices of reduced states are given by $2\ell \times 2\ell$-restrictions of these spectral projections.

Equipped with these tools we can then complete the proof of the area law (\ref{arealaw}) in Section~\ref{sec:arealaw}. This is accomplished by using an extension of an argument from \cite{PasturSlavin} to reduce the claim to the eigencorrelator localization bound (\ref{ecloc}).

The trace identity (\ref{redtocormat}) for Fermion systems seems to originate from \cite{VLRK}, see also \cite{LRV} for a more detailed presentation and \cite{EisertCramerPlenio} for a survey of subsequent results. In these works (\ref{redtocormat}) was used as a tool in the study of the ground state entanglement for the XY chain in {\it constant} magnetic field,
\begin{equation}
- \sum_{j=1}^{n-1} [ (1+\gamma) \sigma_j^X \sigma_{j+1}^X + (1-\gamma) \sigma_j^Y \sigma_{j+1}^Y] - \nu \sum_{j=1}^n  \sigma_j^Z.
\end{equation}
In particular, the dimension reduction from $2^n$ to $2n$ accomplished by (\ref{redtocormat}) (or $2^{\ell}$ to $2\ell$ for the subsystem) allowed numerical predictions for the limit of large subsystems $\Lambda_1$ (and in the thermodynamic limit of an infinite chain $\Lambda$). These were later proven rigorously, see \cite{Itsetal} for most complete results as well as references therein. Generally, it was found that the ground state entanglement for the anisotropic chain remains bounded in the size $\ell$ of the subsystem, i.e. satisfies an area law. Phase transitions with divergent entanglement appear for $\nu=2$ and for the isotropic chain $\gamma=0$. In particular, \cite{JinKorepin} shows that in the latter case the ground state entanglement grows as $\log \ell$.

Our work here differs in several significant respects from all previous works on entanglement in the XY chain and the related Fermion systems: We consider general eigenstates of the XY chain and not only ground states here, we work in finite volume (proving bounds which hold uniformly in the volume) rather than in the thermodynamic limit of an infinite chain, and, finally, use a modification of the Jordan-Wigner transform to be able to consider entanglement with respect to more general subsystems. Partly for these reasons, but also to make a number of tools from the physics literature accessible to a broader audience, we include a self-contained presentation of relevant parts of the theory of finite Fermion systems here. This includes thorough discussions of correlation matrices, Bogoliubov transforms and quasi free states. Some of this is contained in Section~\ref{sec:EEE}, with more background collected in Appendix~\ref{App:Bogo}.

\section{Reduction of $H$ to $M$ via Jordan-Wigner} \label{sec:JordanWigner}

The importance of the XY chain as a model in the theory of quantum spin systems goes back to the work of Lieb, Schultz and Mattis \cite{LSM}, where it was shown that the XY chain with constant coefficients (and initially without magnetic field) is an exactly solvable model. Their argument proceeds by using the Jordan-Wigner transform to reduce the XY chain to a free Fermion system. For the last half century this has turned the XY chain into a canonical toy model for quantum spin systems, frequently used as a first example to illustrate new concepts.

It was understood that the methods of Lieb, Schultz and Mattis can be extended to include magnetic fields and to allow for variable coefficients $\mu_j$, $\gamma_j$ and $\eta_j$. In the latter case the model is not exactly solvable, rather the diagonalization of $H$ can be reduced to the diagonalization of the effective Hamiltonian $M$.

Below we briefly summarize this reduction. A similar account with somewhat more detail (but different sign conventions) is provided in Section~3.1 of \cite{HamzaSimsStolz}.

The local lowering and raising operators are
\begin{equation}\label{eq:as}
a := \frac{1}{2}(\sigma^X -i \sigma^Y) = \left( \begin{array}{cc} 0
& 0 \\ 1 & 0 \end{array} \right),\ \text{and}\ a^* = \frac{1}{2}(\sigma^X +i \sigma^Y) = \left( \begin{array}{cc} 0
& 1 \\ 0 & 0 \end{array} \right),
\end{equation}
whose actions on the $j$-th spin in $\mathcal H$ will be denoted by $a_j$ and $a_j^*$.

The {\it Jordan-Wigner transform} refers to the operators
\begin{equation} \label{eq:cjs}
c_1 := a_1, \quad c_j := \sigma_1^Z \ldots \sigma_{j-1}^Z a_j, \quad j= 2,\ldots,n,
\end{equation}
which satisfy the {\it canonical anti-commutation relations} (CAR)
\begin{equation} \label{CARproperties}
\{c_j, c_k^*\} = \delta_{jk}\I, \quad \{c_j, c_k\} = \{c_j^*, c_k^*\} = 0 \quad \mbox{for all $j, k =
1,\ldots,n$}.
\end{equation}
Following the more general convention used in Appendix~\ref{App:Bogo} below, we will refer to
\begin{equation} \label{JWsystem}
\mathcal{C}= (c_1, c_1^*, \ldots,\ldots, c_n, c_n^*)^t
\end{equation}
as the {\it Jordan-Wigner Fermionic system}.

The XY chain $H$ can be expressed in terms of the Jordan-Wigner operators as
\begin{equation} \label{eq:Crep}
H =   \mathcal{C}^* M \mathcal{C}.
\end{equation}
Here the right hand side is interpreted in the sense of matrix multiplication of the column $\mathcal{C}$, the row $\mathcal{C}^* = (c_1^*, c_1 \ldots, c_n^*, c_n)$, and the scalar $2n\times 2n$-matrix $M$ given in (\ref{effHam}). If $P$ is a permutation matrix which maps the canonical basis vectors $e_1,\ldots,e_{2n}$ of $\mathbb{C}^{2n}$ to $e_1,e_{n+1},e_2,e_{n+2},\ldots,e_n,e_{2n}$, then
\begin{equation}\label{blockmatrix}
P M P^t=: \tilde{M} =\begin{pmatrix}
            A & B \\
            -B & -A \\
          \end{pmatrix},
\end{equation}
where
\begin{equation}
A=\left( \begin{matrix} -\nu_1 & \mu_1 & & \\ \mu_1 & \ddots & \ddots & \\ & \ddots & \ddots & \mu_{n-1} \\ & & \mu_{n-1} & -\nu_n \end{matrix} \right),
\;\;B= \left( \begin{array}{cccc} 0 &\gamma_1 \mu_1 &  & \\ -\gamma_1 \mu_1 & \ddots & \ddots&   \\ & \ddots & \ddots & \gamma_{n-1} \mu_{n-1} \\ & & -\gamma_{n-1} \mu_{n-1} & 0 \end{array} \right).
\end{equation}

We see $A^* = A^t = A$ and $B^* = B^t = -B$, and thus $\tilde{M}^* = \tilde{M}^t = \tilde{M}$.

Transforming with the unitary $\begin{pmatrix} 0 & \I \\ \I & 0 \end{pmatrix}$, we see that $\tilde{M}$ is unitarily equivalent to $-\tilde{M}$, and thus, in particular, has spectrum symmetric to zero. It can be diagonalized by an orthogonal matrix $\hat{W}$ Let
\begin{equation} \label{Wmatrix}
\hat{W} := \frac{1}{2} \left( \begin{array}{cc} V+U & V-U \\ V-U & V+U \end{array} \right),
\end{equation}
where $U$ and $V$ are real orthogonal matrices associated with the singular value decomposition of $S:=A+B$ via
\begin{equation}\label{eq:UV}
USV^t=\Lambda:=\diag\{\lambda_1,\lambda_2,\ldots,\lambda_n\}.
\end{equation}
Here $0\leq \lambda_1\leq \lambda_2\leq \ldots\leq \lambda_n$ are the singular values of $S$, i.e.\ the eigenvalues of $(S^*S)^{\frac{1}{2}}$, counted with multiplicity. Calculations show that
\begin{equation}\label{diagM}
\widetilde{M}=\hat{W}^t\begin{pmatrix}
                         \Lambda & 0 \\
                         0 & -\Lambda \\
                       \end{pmatrix}\hat{W}, \quad W M W^t= \bigoplus_{j=1}^n\begin{pmatrix}
          \lambda_j & 0 \\
          0 & -\lambda_j \\
        \end{pmatrix},
\end{equation}
where $W=P^t \hat{W} P$ is Bogoliubov, i.e.\ orthogonal and satisfies (\ref{eq:Bogo:WJ}).
Let
\begin{equation} \label{Bsystem}
\mathcal{B}:=W\mathcal{C}.
\end{equation}
By Lemma~\ref{thm:Bogo:transf} this is a Fermionic system and
\begin{eqnarray} \label{eq:anisotropicFermi}
H & = & \mathcal{C}^* M \mathcal{C} = \mathcal{B}^* W M W^t \mathcal{B} = \mathcal{B}^* \bigoplus_{j=1}^n\left( \begin{array}{cc} \lambda_j & 0 \\ 0 & -\lambda_j \end{array} \right) \mathcal{B}  \\
& = & \sum_{j=1}^n \lambda_j (b_j^* b_j - b_j b_j^*)
 =  2 \sum_{j=1}^n \lambda_j b_j^* b_j - E_0 \I, \nonumber
\end{eqnarray}
where $E_0 = \sum_{j=1}^n \lambda_j$. Thus $H$ has been written in the form of a free Fermion system. Let $\Omega$ be the vacuum vector of the $b_j$ and
\begin{equation}\label{eq:PsiAlpha}
\psi_{\alpha} = (b_1^*)^{\alpha_1} \ldots (b_n^*)^{\alpha_n} \Omega, \quad \alpha =(\alpha_1,\alpha_2,\ldots,\alpha_n) \in \{0,1\}^n
\end{equation}
the orthonormal basis of $\mathcal{H}$ associated with $\mathcal{B}$, see Appendix~\ref{App:Bogo}.
The $\psi_{\alpha}$ form a complete set of eigenvectors for $H$ with corresponding eigenvalues $2\sum_{j:\alpha_j=1} \lambda_j - E_0$, so that the spectrum of $H$ is
\begin{equation}
\sigma(H) = \left\{ \sum_{j:\alpha_j = 1} \lambda_j - \sum_{j:\alpha_j=0} \lambda_j:\; \alpha \in \{0,1\}^n \right\}.
\end{equation}

\section{Entanglement Entropy of Eigenstates} \label{sec:EEE}

\subsection{Correlation matrices and quasi free states} \label{sec:correlation}

The goal of this section is to show that the entanglement of the eigenstates $\rho_{\alpha} = |\psi_{\alpha} \rangle \langle \psi_{\alpha} |$ of $H$ with respect to the bipartite decomposition ${\mathcal H}_1 \otimes {\mathcal H}_2$ given by (\ref{bipartite}) can be expressed in terms of {\it correlation matrices} of size $2\ell \times 2\ell$ and that these correlation matrices are restrictions of suitable spectral projections for the one-particle Hamiltonian $M$. This provides the crucial connection between properties of $M$ and properties of $H$ which will be exploited in Section~\ref{sec:arealaw} to prove Theorem~\ref{mainthm}.

The expected value of an observable $A\in\mathcal{B}(\mathcal{H})$ in the mixed state $\rho\in\mathcal{B}(\mathcal{H})$ (i.e.\ $\rho \ge 0$ with $\Tr \rho =1$) is given by
\begin{equation}
\langle A\rangle_{\rho}:= \Tr A\rho.
\end{equation}

Let $\mathcal{D}=(d_1, d_1^*, d_2,d_2^*, \ldots, d_n, d_n^*)^t$ be a general Fermonic system of $\mathcal H$ in the sense of Appendix~\ref{App:Bogo}. The correlation matrix of the state $\rho$ with respect to $\mathcal{D}$ is defined to be the $2n\times 2n$ matrix
\begin{equation} \label{cormatdef}
\Gamma^\mathcal{D}_\rho:=\langle\mathcal{D}\mathcal{D}^*\rangle_\rho,
\end{equation}
with the row $\mathcal{D}^* = (d_1^*,d_1 \ldots, d_n^*, d_n)$ and $\langle\mathcal{D}\mathcal{D}^*\rangle_\rho$ to be understood in the sense of taking
expectations of each of the operator-valued
entries of the $2n \times 2n$-matrix $\mathcal{D}\mathcal{D}^*$.

Using the CAR, cyclicity of the trace, and $\tr A^* = \overline{\tr A}$, one checks that correlation matrices are generally of the form $\Gamma= \left(\Gamma_{jk}\right)_{1\leq j,k\leq n}$ with $2\times 2$-matrix-valued matrix elements 
\begin{equation} \label{cmstructure}
\Gamma_{jk}=\frac{1}{2}\left(\delta_{j,k}\I_2+\begin{pmatrix}
                                                 X_{jk} & Y_{jk} \\
                                                 -\overline{Y_{jk}} & -\overline{X_{jk}} \\
                                               \end{pmatrix}
\right).
\end{equation}
Here $X$ and $Y$ are $n\times n$-matrices such that $X^* = X$ and $Y^* = -\overline{Y}$. Note that
\begin{equation}
P \Gamma P^t= \frac{1}{2} \left( \I_{2n} + \begin{pmatrix} X & Y  \\  -\overline{Y} & -\overline{X} \end{pmatrix} \right).
\end{equation}
In particular, correlation matrices, as defined here, are self-adjoint. We mention that in the physics literature frequently the {\it Majorana Fermions} $d_j+d_j^*$, $-i(d_j-d_j^*)$, $j=1,\ldots,n$, are used to define correlation matrices, which in this case become skew-adjoint.

\vspace{.3cm}

For $f:\Lambda \rightarrow \mathbb{C}$ define
\begin{equation}
d(f) = \sum_{j=1}^n \overline{f}_j d_j
\end{equation}
and denote its adjoint by $d^*(f) = \sum_{j} f_j d_j^*$. More generally, let
\begin{equation} \label{notation:Dfg}
D(f,g):=d(f)+d^*(g).
\end{equation}
We say that the state $\rho$ is {\it quasi free} with respect to $\mathcal{D}$ if
\begin{equation} \label{quasifree}
\left\langle\prod_{j=1}^m D_j\right\rangle_\rho = \left\{\begin{array}{ll}
                                                    0, & \hbox{if $m$ is odd;} \\
                                                    \displaystyle\sum_{k=2}^m (-1)^{k}\langle D_k D_1\rangle_\rho \left\langle\prod_{\tiny{\begin{array}{c}
                                                  j=2  \\
                                                  j\neq k
                                                \end{array}}
}^m D_j\right\rangle_\rho, & \hbox{if $m$ is even.}
                                                  \end{array}
                                                \right.
\end{equation}
Here $D_j$ is short for $D(f_j,g_j)$ and a pair $f_j, g_j: \Lambda \to \C$. Also note that here and throughout this paper operator products as on the left hand side of (\ref{quasifree}) are to be read from the right to left, i.e.\ $\prod_{j=1}^m D_j = D_m  \ldots D_1$.

For the case of even $m$, where (\ref{quasifree}) is a form of {\it Wick's rule}, an iterative application shows that expectations $\left\langle \prod_{j=1}^m D_j \right\rangle_{\rho}$ can be written as a sum of products of terms of the form $\langle D_r D_s \rangle_{\rho}$. We note that the resulting expression is known as the {\it Pfaffian} (denoted by $\pf$) of the skew-adjoint $m\times m$-matrix $D^{(\rho,m)}$ with entries $D^{(\rho,m)}_{s,r} = \langle D_r D_s \rangle_{\rho}$ for $1\leq  s<r\leq n$ and appropriately extended by antisymmetry. The Pfaffian of an odd-dimensional skew-adjoint matrix is generally set to zero, so that (\ref{quasifree}) can be restated as
\begin{equation}
\left\langle\prod_{j=1}^m D_j\right\rangle_\rho = \pf\left(D^{(\rho,m)}\right).
\end{equation}

As the expectations $\langle D_r D_s \rangle_{\rho}$ are linear combinations of elements of the correlation matrix $\Gamma_{\rho}^{\mathcal{D}}$, we conclude that $\left\langle \prod_{j=1}^m D_j \right\rangle_{\rho}$ is uniquely determined by the correlation matrix. By Lemma~\ref{lemma:FermionsGeneratesH} this determines $\langle A \rangle_{\rho}$ for {\it all} $A\in \mathcal{B}(\mathcal{H})$, which in turn characterizes $\rho$. This yields the first claim of the following Lemma:

\begin{lem} \label{lem:quasifree}
If a state $\rho$ is quasi free with respect to $\mathcal{D}$, then $\rho$ is uniquely determined by the correlation matrix $\Gamma_{\rho}^{\mathcal{D}}$. Moreover, the von Neumann entropy of $\rho$ is given by
\begin{equation} \label{redtocormat}
\mathcal{S}(\rho) = -\Tr \rho \log \rho = - \tr \Gamma_{\rho}^{\mathcal D} \log \Gamma_{\rho}^{\mathcal{D}}.
\end{equation}
\end{lem}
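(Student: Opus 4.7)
My plan is to prove only the trace identity (\ref{redtocormat}), since the uniqueness of $\rho$ from its correlation matrix was just argued in the paragraph preceding the lemma (via Wick's rule, the Pfaffian representation, and Lemma~\ref{lemma:FermionsGeneratesH}). The strategy is to diagonalize the correlation matrix $\Gamma_\rho^\mathcal{D}$ by a Bogoliubov transformation and then exhibit $\rho$ itself as a tensor product of single-mode quasi-free states in the transformed Fermionic basis, so that both sides of (\ref{redtocormat}) reduce to the same sum of binary entropies.

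First I would appeal to the general Bogoliubov diagonalization for correlation matrices (collected in Appendix~\ref{App:Bogo}): there exists a Bogoliubov matrix $W$ such that $W \Gamma_\rho^\mathcal{D} W^*$ is block diagonal with $2\times 2$ blocks $\bigl(\begin{smallmatrix} 1-\lambda_j & 0 \\ 0 & \lambda_j \end{smallmatrix}\bigr)$, $j=1,\ldots,n$, where $0\le\lambda_j\le 1$. Setting $\mathcal{D}':= W\mathcal{D}$, Lemma~\ref{thm:Bogo:transf} gives a new Fermionic system, and a direct computation using (\ref{cormatdef}) shows $\Gamma_\rho^{\mathcal{D}'}= W\Gamma_\rho^\mathcal{D} W^*$. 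Moreover, quasi-freeness is invariant under Bogoliubov transformations, since every $D'(f,g)=d'(f)+(d')^*(g)$ is itself a linear combination of the $D(\tilde f,\tilde g)$, so that Wick's rule (\ref{quasifree}) transfers verbatim to $\mathcal{D}'$.

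Next I would show that $\rho$ factorizes as $\rho=\bigotimes_{j=1}^n \rho_j$ in the mode decomposition of $\mathcal{H}$ associated with $\mathcal{D}'$ (the basis $(d'_1{}^*)^{\alpha_1}\cdots (d'_n{}^*)^{\alpha_n}\Omega'$ of Appendix~\ref{App:Bogo}). The candidate product state $\tilde\rho:=\bigotimes_j \rho_j$ with $\rho_j=\operatorname{diag}(1-\lambda_j,\lambda_j)$ on the $j$-th mode is manifestly quasi-free with respect to $\mathcal{D}'$: odd moments vanish by parity in each factor, and even moments satisfy Wick's rule because distinct modes are independent and the single-mode two-point functions reproduce the diagonal correlation matrix. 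Since $\tilde\rho$ shares its correlation matrix with $\rho$, the uniqueness half of the lemma (already established) forces $\rho=\tilde\rho$.

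Finally, from the product form, $\mathcal{S}(\rho)=\sum_{j=1}^n \mathcal{S}(\rho_j)=-\sum_{j=1}^n[\lambda_j\log\lambda_j+(1-\lambda_j)\log(1-\lambda_j)]$. On the other hand, $\tr\Gamma_\rho^\mathcal{D}\log\Gamma_\rho^\mathcal{D}$ is unitarily invariant under conjugation by $W$ (the Bogoliubov $W$ is in particular unitary on $\C^{2n}$), and the diagonalized form yields the same sum. Hence both sides of (\ref{redtocormat}) agree. The only nontrivial step is the Bogoliubov diagonalization of $\Gamma_\rho^\mathcal{D}$ into the paired block-diagonal form; I would expect this to be the main obstacle, but it should be available from the self-adjointness and the structure (\ref{cmstructure}) of $\Gamma_\rho^\mathcal{D}$ together with the results relegated to Appendix~\ref{App:Bogo}, analogous to the diagonalization of $\tilde M$ performed in (\ref{Wmatrix})--(\ref{diagM}).
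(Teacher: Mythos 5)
Your argument follows essentially the same route as the paper's proof of Proposition~\ref{thm:App:main}: Bogoliubov diagonalization of $\Gamma_{\rho}^{\mathcal{D}}$ (Lemma~\ref{lem:corr:symspectrum}), comparison with a diagonal product state sharing that correlation matrix, and then the uniqueness statement to identify $\rho$ with it — the only cosmetic difference being that you place the product state directly in the $\mathcal{D}'$-mode basis and invoke uniqueness on a single Hilbert space, where the paper phrases the same step as unitary equivalence to a reference product state via Lemma~\ref{thm:App:corrBogo:1}. The one step you call ``manifest'' — that the diagonal product state is quasi free — is exactly the content of Proposition~\ref{thm:Wicks:generalmainthm} and genuinely requires the iterative commutation argument given there (independence of modes alone does not yield Wick's rule, e.g.\ for fourth moments within a single mode), so you should cite or reproduce that computation rather than assert it.
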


Note here that $\Tr$ and $\tr$ denote the traces in $\mathcal{B}(\mathcal{H})$ and $\C^{2n \times 2n}$, respectively. The second claim is restated as Proposition~\ref{thm:App:main} in Appendix~\ref{App:Bogo}, where a proof is provided.

\subsection{Correlation matrices of eigenstates} \label{coreigenstates}

Next we calculate the correlation matrix of the eigenstates $\rho_{\alpha} = |\psi_{\alpha}\rangle \langle \psi_{\alpha}|$ of $H$ given by (\ref{eq:PsiAlpha}) with respect to the Jordan-Wigner Fermionic system $\mathcal{C}$.

\begin{lem} \label{rhoCcorrelation}
Assume that $M$ has simple spectrum. Then, for each $\alpha \in \{0,1\}^n$, the correlation matrix of $\rho_\alpha$ with respect to the Jordan-Wigner Fermionic system $\mathcal{C}$ in (\ref{JWsystem}) is given by
\begin{equation} \label{eq:rhoCcor}
\Gamma_{\rho_\alpha}^{\mathcal{C}}=\chi_{\Delta_\alpha}(M),
\end{equation}
i.e.\ the spectral projection for $M$ onto the set
\begin{equation}
\Delta_\alpha:=\{\lambda_j:\alpha_j=0\}\cup\{-\lambda_j:\alpha_j=1\}.
\end{equation}
Here $\{\lambda_j,-\lambda_j: j=1,2,\ldots,n\}$ are the eigenvalues of $M$.
\end{lem}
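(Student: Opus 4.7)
The strategy is to compute the correlation matrix first in the ``diagonal'' Fermionic system $\mathcal{B}$ (where the $\psi_\alpha$ are just occupation number states), then transport the result back to the Jordan--Wigner system $\mathcal{C}$ via the Bogoliubov transform $W$, and finally recognize the outcome as a spectral projection of $M$.

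\textbf{Step 1: Correlation matrix in the $\mathcal{B}$-system.} By construction, $\psi_\alpha = (b_1^*)^{\alpha_1}\cdots (b_n^*)^{\alpha_n}\Omega$ is a joint eigenvector of the number operators $b_j^* b_j$ with eigenvalues $\alpha_j\in\{0,1\}$, so it is a product state with respect to the mode decomposition induced by $\mathcal{B}$. A direct computation using the CAR and orthonormality of the $\psi_\alpha$ therefore yields
\begin{equation*}
\langle b_j^* b_k\rangle_{\rho_\alpha}=\delta_{jk}\alpha_j, \qquad \langle b_j b_k^*\rangle_{\rho_\alpha}=\delta_{jk}(1-\alpha_j), \qquad \langle b_j b_k\rangle_{\rho_\alpha}=\langle b_j^* b_k^*\rangle_{\rho_\alpha}=0,
\end{equation*}
so that, in the $2\times 2$-block form dictated by the ordering $\mathcal{B}=(b_1,b_1^*,\ldots,b_n,b_n^*)^t$,
\begin{equation*}
\Gamma_{\rho_\alpha}^{\mathcal{B}} \;=\; \bigoplus_{j=1}^n \begin{pmatrix} 1-\alpha_j & 0 \\ 0 & \alpha_j \end{pmatrix}.
\end{equation*}

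\textbf{Step 2: Transport to the $\mathcal{C}$-system.} Since $\mathcal{B}=W\mathcal{C}$ with $W$ real orthogonal (Bogoliubov), we have $\mathcal{C}=W^t\mathcal{B}$ and hence $\mathcal{C}\mathcal{C}^* = W^t\,\mathcal{B}\mathcal{B}^*\,W$. Taking entrywise expectations commutes with the (scalar) matrix conjugation, so
\begin{equation*}
\Gamma_{\rho_\alpha}^{\mathcal{C}} \;=\; W^t\, \Gamma_{\rho_\alpha}^{\mathcal{B}}\, W.
\end{equation*}

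\textbf{Step 3: Identification with a spectral projection of $M$.} Recall from \eqref{diagM} that
\begin{equation*}
W M W^t \;=\; \bigoplus_{j=1}^n \begin{pmatrix} \lambda_j & 0 \\ 0 & -\lambda_j \end{pmatrix}.
\end{equation*}
The simple spectrum assumption on $M$ forces the $2n$ numbers $\pm\lambda_j$ to be pairwise distinct (in particular $\lambda_j>0$ for all $j$, since $M$ is unitarily equivalent to $-M$); thus the set $\Delta_\alpha$ is unambiguously defined and the spectral projection $\chi_{\Delta_\alpha}(M)$ is given by the functional calculus applied to $WMW^t$:
\begin{equation*}
\chi_{\Delta_\alpha}(M) \;=\; W^t\,\bigoplus_{j=1}^n \chi_{\Delta_\alpha}\!\left(\begin{pmatrix} \lambda_j & 0 \\ 0 & -\lambda_j \end{pmatrix}\right) W \;=\; W^t\,\bigoplus_{j=1}^n \begin{pmatrix} 1-\alpha_j & 0 \\ 0 & \alpha_j \end{pmatrix} W,
\end{equation*}
where in the last equality we used that $\lambda_j\in\Delta_\alpha$ iff $\alpha_j=0$ and $-\lambda_j\in\Delta_\alpha$ iff $\alpha_j=1$. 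Combining with Steps~1 and~2 gives the asserted equality $\Gamma_{\rho_\alpha}^{\mathcal{C}}=\chi_{\Delta_\alpha}(M)$.

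\textbf{Anticipated obstacle.} Steps~1 and~3 are essentially bookkeeping once the correct conventions are fixed; the only substantive point is that Step~2 genuinely uses the linearity of the map $\rho\mapsto\Gamma_\rho^{(\cdot)}$ under orthogonal change of Fermionic basis, which is precisely the content of the Bogoliubov framework developed in Appendix~\ref{App:Bogo}. The simple-spectrum hypothesis is needed only to rule out the degenerate case $\lambda_j=0$ (or $\lambda_j=\lambda_k$ for $j\neq k$), where $\Delta_\alpha$ could not be read off from $\alpha$ without ambiguity.
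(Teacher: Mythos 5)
Your proposal is correct and follows essentially the same route as the paper's proof: compute $\Gamma_{\rho_\alpha}^{\mathcal{B}}$ directly from the action of the $b_j$, $b_j^*$ on the occupation-number basis, conjugate by the Bogoliubov matrix $W$ to pass to $\mathcal{C}$, and identify the result as $\chi_{\Delta_\alpha}(M)$ via (\ref{diagM}), with the simple-spectrum hypothesis entering exactly where you place it.
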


Note that, in particular, $\rho_0=|\Omega\rangle\langle \Omega|$ is the ground state projection for $H$ and $\Gamma^{\mathcal{C}}_{\rho_0}= \chi_{\{\lambda_1,\ldots,\lambda_n\}}(\tilde{M}) = \chi_{(0,\infty)}(\tilde{M})$. The difference between $\Gamma^{\mathcal{C}}_{\rho_0}$ and $\Gamma_{\rho_\alpha}^{\mathcal{C}}$ is that $\lambda_j$ is replaced by $-\lambda_j$ for each site $j$ in which a ``particle is created'' by $b_j^*$ in (\ref{eq:PsiAlpha}).

\begin{proof}
The first step in the proof is to show that
\begin{equation}\label{eq:proof:correlation}
\Gamma^{\mathcal{B}}_{\rho_{\alpha}}=\bigoplus_{k=1}^n\begin{pmatrix}
                                       \delta_{\alpha_k,0} & 0 \\
                                       0 & \delta_{\alpha_k,1} \\
                                     \end{pmatrix},
\end{equation}
where $\mathcal{B}$ is the Fermionic system (\ref{Bsystem}).
First note that, using the definition (\ref{eq:PsiAlpha}) and the anti-commutation properties of the $b_j$, one has
\begin{equation} \label{bpsi1}
b_j^* \psi_{\alpha} = \left\{ \begin{array}{ll} 0 & \mbox{if} \;\alpha_j=1, \\ \pm\psi_{\alpha+e_j} & \mbox{if} \;\alpha_j=0, \end{array} \right.
\end{equation}
as well as
\begin{equation} \label{bpsi2}
b_j \psi_{\alpha} = \left\{ \begin{array}{ll} 0 & \mbox{if} \;\alpha_j=0, \\ \pm\psi_{\alpha-e_j} & \mbox{if} \;\alpha_j=1. \end{array} \right.
\end{equation}
Then statement (\ref{eq:proof:correlation}) follows from the following calculations
\begin{eqnarray}
\Tr b_j b_k^* \rho_{\alpha} & = & \langle \psi_{\alpha}, b_j b_k^* \psi_{\alpha} \rangle = \langle b_j^* \psi_{\alpha}, b_k^* \psi_{\alpha} \rangle \\
& = & \left\{ \begin{array}{ll} 0 & \mbox{if} \; \alpha_j = 1 \;\mbox{or} \; \alpha_k =1, \\ \langle \pm\psi_{\alpha+e_j}, \pm\psi_{\alpha+e_k} \rangle =\delta_{jk} & \mbox{if} \; \alpha_j = \alpha_k =0,  \end{array} \right. \nonumber
\end{eqnarray}
And,
\begin{eqnarray}
\Tr b_j b_k \rho_{\alpha} & = & \langle \psi_{\alpha}, b_j b_k \psi_{\alpha} \rangle = \langle b_j^* \psi_{\alpha}, b_k \psi_{\alpha} \rangle \\
& = & \left\{ \begin{array}{ll} 0 & \mbox{if} \; \alpha_j = 1 \;\mbox{or} \; \alpha_k =0, \\ \pm \langle\psi_{\alpha+e_j}, \psi_{\alpha-e_k} \rangle = 0 & \mbox{if} \;\alpha_j =0 \;\mbox{and} \;\alpha_k= 1, \end{array} \right. \nonumber
\end{eqnarray}

The general structure (\ref{cmstructure}) of correlation matrices determines the remaining entries of $\Gamma^{\mathcal{B}}_{\rho_{\alpha}}$ and completes the proof of (\ref{eq:proof:correlation}).

We now prove (\ref{eq:rhoCcor}). As $\mathcal{C} = W^t \mathcal{B}$, by Lemma~\ref{thm:Bogo:transf},
\begin{equation} \label{corchange}
\Gamma_{\rho_\alpha}^{\mathcal{C}} = W^t \Gamma_{\rho_\alpha}^{\mathcal{B}} W = W^t \bigoplus_{k=1}^n\begin{pmatrix} \delta_{\alpha_k,0} & 0 \\ 0 & \delta_{\alpha_k,1} \end{pmatrix} W.
\end{equation}

Now, since the eigenvalues of $M$ are simple, i.e.\ $-\lambda_n < \ldots < -\lambda_1 < 0 < \lambda_1 < \ldots < \lambda_n$, $\bigoplus_{k=1}^n\begin{pmatrix} \delta_{\alpha_k,0} & 0 \\ 0 & \delta_{\alpha_k,1} \end{pmatrix}$ is the spectral projection for $\bigoplus_{k=1}^n\begin{pmatrix} \lambda_k & 0 \\ 0 & -\lambda_k \end{pmatrix}$ onto $\Delta_{\alpha}$. Therefore (\ref{eq:rhoCcor}) follows from (\ref{diagM}).
\end{proof}

\begin{lem}\label{lemma:Wicks:diagonal}
The eigenstates $\rho_\alpha$ are quasi free with respect to the Jordan-Wigner Fermionic system $\mathcal{C}$.
\end{lem}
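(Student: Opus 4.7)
The plan is to reduce the claim to Wick's theorem for a \emph{vacuum} state via two successive Bogoliubov transformations, thereby sidestepping any direct computation involving the long Jordan--Wigner strings $\sigma_1^Z\cdots\sigma_{j-1}^Z$ that appear in the $c_j$. The underlying general principle, which presumably appears in Appendix~\ref{App:Bogo}, is that quasi-freeness is preserved under Bogoliubov change of Fermionic system: if $\rho$ is quasi free with respect to $\mathcal{D}$ and $\mathcal{D}'=V\mathcal{D}$ is another Fermionic system obtained via a Bogoliubov matrix $V$, then $\rho$ is quasi free with respect to $\mathcal{D}'$ as well. The reason is that operators of the form $D(f,g)=d(f)+d^*(g)$ span the same real linear subspace of $\mathcal{B}(\mathcal{H})$ whether expressed through $\mathcal{D}$ or through $\mathcal{D}'$ (only the defining pairs $(f,g)$ change), so Wick's rule (\ref{quasifree}) transfers verbatim.

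Given this, the first application of the principle reduces the problem: since $\mathcal{C}=W^t\mathcal{B}$ by (\ref{Bsystem}) and Lemma~\ref{thm:Bogo:transf}, it suffices to show that $\rho_\alpha$ is quasi free with respect to $\mathcal{B}$. For the second reduction, for each $\alpha\in\{0,1\}^n$ I would define
\[
\tilde b_j := \begin{cases} b_j, & \alpha_j=0, \\ b_j^*, & \alpha_j=1, \end{cases}
\]
and set $\tilde{\mathcal{B}}=(\tilde b_1,\tilde b_1^*,\ldots,\tilde b_n,\tilde b_n^*)^t$. Swapping creators and annihilators site by site preserves the CAR (\ref{CARproperties}), so $\tilde{\mathcal{B}}$ is again Fermionic and is obtained from $\mathcal{B}$ by a (permutation-type) Bogoliubov transformation. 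Formulae (\ref{bpsi1})--(\ref{bpsi2}) give $\tilde b_j\psi_\alpha=0$ for every $j$, so $\psi_\alpha$ is, up to sign, the vacuum vector of $\tilde{\mathcal{B}}$. A second application of the principle then reduces matters to showing that the vacuum state $\rho_\alpha$ of $\tilde{\mathcal{B}}$ is quasi free with respect to $\tilde{\mathcal{B}}$.

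This last step is the classical Wick theorem for vacuum states, which I would establish by induction on $m$. For odd $m$ the expectation vanishes by fermion parity, since $\psi_\alpha$ has definite $\tilde{\mathcal{B}}$-particle number (namely zero) and each $D_j$ shifts the parity by one. For even $m$, expand $\prod_{j=1}^m D_j$ by bilinearity and commute all $\tilde b_k$'s to the right using the CAR; any $\tilde b_k$ that reaches $\psi_\alpha$ annihilates it, while each anticommutator $\{\tilde b_k,\tilde b_\ell^*\}$ produced along the way is precisely one of the pair correlators $\langle D_kD_1\rangle_{\rho_\alpha}$ appearing on the right-hand side of (\ref{quasifree}). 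Organizing the resulting sum by which operator is contracted with $D_1$ reproduces the recursion in (\ref{quasifree}) exactly.

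The genuinely substantive point is the Bogoliubov invariance used in the two reductions. One has to verify that when a functional $D(f,g)$ defined via one Fermionic system is rewritten in terms of another, it remains in the same class $D(\tilde f,\tilde g)$ for appropriate new data, a compatibility that hinges on the fact that $W$ (and the permutation swapping $b_j\leftrightarrow b_j^*$) respects the $2\times 2$ block symplectic structure captured by (\ref{eq:Bogo:WJ}). Granted this linear-algebra bookkeeping, which fits naturally into the framework of Appendix~\ref{App:Bogo}, the remaining content of the lemma is the vacuum Wick theorem, which is entirely combinatorial.
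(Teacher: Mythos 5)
Your argument is correct, but after the shared first step it diverges from the paper's route in an interesting way. Both proofs begin by invoking the Bogoliubov invariance of quasi-freeness (the paper's Lemma~\ref{lemma:BogoWicks1}(b)) to reduce the claim from $\mathcal{C}$ to $\mathcal{B}$. From there the paper goes in the opposite direction: it uses the unitary equivalence of the Fermionic systems $\mathcal{B}$ and $\mathcal{C}$ (Lemma~\ref{lemma:BogoWicks1}(a) together with Lemma~\ref{lemma:Ferm:UnitEquiv}) to transfer the problem to the states $\rho_\alpha^{(c)}$ built from the Jordan--Wigner operators themselves, computes these explicitly as diagonal product states $\bigotimes_j \diag(\delta_{\alpha_j,1},\delta_{\alpha_j,0})$, and then cites Proposition~\ref{thm:Wicks:generalmainthm}, whose proof is the classical thermal-state Wick argument based on the commutation relation $c_k\rho = \tfrac{\eta_k}{1-\eta_k}\rho c_k$. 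You instead perform a second, particle--hole Bogoliubov transformation $b_j\mapsto b_j^*$ at the sites with $\alpha_j=1$ (the block-diagonal matrix with blocks $\I$ or $\sigma^X$ is indeed orthogonal and satisfies (\ref{eq:Bogo:WJ})), which turns $\psi_\alpha$ into the vacuum of the new system, and then prove the vacuum Wick theorem directly by parity and normal ordering. Your route is more self-contained for this particular lemma, since the full strength of Proposition~\ref{thm:Wicks:generalmainthm} (general $\eta_j\in[0,1]$, including the limiting argument for $\eta_j\in\{0,1\}$) is not needed; the paper's route has the advantage that Proposition~\ref{thm:Wicks:generalmainthm} must be established anyway for the entropy identity (\ref{redtocormat}) in Proposition~\ref{thm:App:main}, so no separate combinatorial argument is required. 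The only place where you should be slightly more careful is the even case of the vacuum Wick theorem: the normal-ordering procedure most naturally yields the full Pfaffian expansion, and recovering the specific recursion (\ref{quasifree}) requires contracting the rightmost factor $D_1$ (whose annihilator part kills the vacuum) against each $D_k$ and tracking the sign $(-1)^k$ from the anticommutations; this is standard but worth writing out.
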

\begin{proof}
Since $\mathcal{B}$ is a Bogoliubov transformation of $\mathcal{C}$, by Lemma~\ref{lemma:BogoWicks1}(b) it suffices to prove that $\rho_\alpha$ is Wick with respect to $\mathcal{B}$. Let $\Omega_c$ and $\psi_{\alpha}^{(c)} = (c_1^*)^{\alpha_1} \ldots (c_n^*)^{\alpha_n} \Omega_c$ be the vacuum and eigenstates associated with $\mathcal{C}$. By Lemma~\ref{lemma:BogoWicks1}(a) we need to prove that $\rho_{\alpha}^{(c)} := |\psi_{\alpha}^{(c)} \rangle \langle \psi_{\alpha}^{(c)}|$ is quasi free with respect to $\mathcal{C}$ for all $\alpha$.

 This can be done by finding $\rho^{(c)}_{\alpha}$ explicitly. From the definition (\ref{eq:cjs}) of the $c_j$ we see that $c_j \begin{pmatrix} 0 \\ 1 \end{pmatrix}^{\otimes n} = 0$ for all $j$. This gives that, up to a phase, $\Omega_c = \begin{pmatrix} 0 \\ 1 \end{pmatrix}^{\otimes n}$ is the vacuum for $\mathcal{C}$, and therefore
\begin{equation}
 \rho_{0}^{(c)} = |\Omega_c\rangle \langle \Omega_c| =\bigotimes_{j=1}^n \begin{pmatrix}
                                                                                     0 & 0 \\
                                                                                     0 & 1 \\
                                                                                   \end{pmatrix}=\prod_{j=1}^n c_j c_j^*.
\end{equation}
Since for $1\leq k\leq n$ we have
\begin{equation}
c_k^* \left(\prod_{j=1}^n c_j c_j^*\right) c_k= \left(\prod_{\tiny \begin{array}{c}
                                                                     j=1 \\
                                                                     j\neq k
                                                                   \end{array}}^n c_j c_j^*\right) c_k^* c_k c_k^* c_k =\left(\prod_{\tiny \begin{array}{c}
                                                                                         j=1 \\
                                                                                         j\neq k
                                                                                       \end{array}}^n c_j c_j^*\right) c_k^* c_k,
\end{equation}
it is easy to see that
\begin{eqnarray}
\rho^{(c)}_{\alpha}&=&\prod_{j=1}^n (c_{n-j}^*)^{\alpha_{n-j}}|\Omega_c\rangle\langle\Omega_c| \prod_{j=1}^n (c_{j})^{\alpha_{j}}=\prod_{\tiny \begin{array}{c}
                                    j=1 \\
                                    \alpha_j=0
                                  \end{array}}^n c_j c_j^* \prod_{\tiny \begin{array}{c}
                                                                          j=1 \\
                                                                          \alpha_j=1
                                                                        \end{array}}^n c_j^* c_j\\
&=&\bigotimes_{j=1}^n \begin{pmatrix}
                    \delta_{\alpha_j,1} & 0 \\
                    0 & \delta_{\alpha_j,0}
                    \end{pmatrix}. \nonumber
\end{eqnarray}
Thus, as a product of diagonal states, $\rho_{\alpha}^{(c)}$ is quasi free with respect to $\mathcal{C}$ by Proposition~\ref{thm:Wicks:generalmainthm}.
\end{proof}

By Lemma~\ref{lem:quasifree} we may conclude that $\Tr \rho_{\alpha} \log \rho_{\alpha} = \tr \Gamma_{\rho_{\alpha}}^{\mathcal{C}} \log \Gamma_{\rho_{\alpha}}^{\mathcal{C}}$. This, however, is a trivial fact  as $\rho_{\alpha}$ is a pure state in $\mathcal{H}$ and $\Gamma_{\rho_{\alpha}}^{\mathcal{C}}$ an orthogonal projection in $\C^{2n}$ and thus both traces vanish. The actual use of Lemmas~\ref{rhoCcorrelation} and \ref{lemma:Wicks:diagonal} is that they provide a first step in establishing a similar trace identity for the {\it reduced} states (with non-vanishing entropy).

\subsection{Entanglement of reduced states} \label{corredstates}

Restrictions of $\Gamma_{\rho_\alpha}^{\mathcal{C}}$, which we have identified in Lemma~\ref{rhoCcorrelation} with spectral projections for $M$, can be used to express the entanglement of the states $\rho_{\alpha}$. Since $\rho_{\alpha}$ is quasi free with respect to $\mathcal{C}$ we can apply the following general theorem for such states. Here, for any subinterval $\Lambda_1 = \{r,\ldots, r+\ell-1\}$, introduce {\it local Jordan-Wigner operators} $\{c_{j}^{(1)},\ j\in\Lambda_1\}$ on $\mathcal{H}_1$ as
\begin{equation}
c_r^{(1)} := a_r, \quad c_j^{(1)} := \sigma_r^Z \ldots \sigma_{j-1}^{Z} a_j, \quad j=r+1,\ldots, r+\ell-1.
\end{equation}
The operators
\begin{equation} \label{locJW}
\mathcal{C}_1 := (c_r^{(1)}, (c_r^{(1)})^*,c_r^{(2)}, (c_r^{(2)})^*, \ldots, c_{r+\ell-1}^{(1)},(c_{r+\ell-1}^{(1)})^*)^t
\end{equation}
are a Fermionic system in $\mathcal{H}_1$.

\begin{thm} \label{thm:EErule}
If $\rho$ is quasi free with respect to $\mathcal{C}$, then the entanglement entropy of $\rho$ with respect to the bipartite decomposition ${\mathcal H}_1 \otimes {\mathcal H}_2$ is given by
\begin{equation}\label{eq:EEmain}
\mathcal{E}(\rho)=-\tr \Gamma_{\rho_1}^{\mathcal{C}_1} \log \Gamma_{\rho_1}^{\mathcal{C}_1},
\end{equation}
where $\mathcal{C}_1$ is the local Jordan-Wigner Fermionic system on $\mathcal{H}_1$. Moreover,
\begin{equation} \label{restrictGamma}
\Gamma_{\rho_1}^{\mathcal{C}_1} \;\mbox{is the restriction of} \; \Gamma_{\rho}^{\mathcal{C}} \; \mbox{to span}\,\{e_{2j-1}, e_{2j}: j\in \Lambda_1\},
\end{equation}
i.e.\ the $2\ell \times 2\ell$-submatrix of $\Gamma_{\rho}^{\mathcal{C}}$ consisting of the $2\times 2$-matrix elements corresponding to $\Lambda_1$.
\end{thm}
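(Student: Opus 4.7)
My plan is to reduce the claim to Lemma~\ref{lem:quasifree} applied to the pair $(\rho_1, \mathcal{C}_1)$ on $\mathcal{H}_1$. Once I verify that $\rho_1$ is quasi free with respect to $\mathcal{C}_1$ and that $\Gamma_{\rho_1}^{\mathcal{C}_1}$ is the $2\ell\times 2\ell$ submatrix of $\Gamma_\rho^{\mathcal{C}}$ described in (\ref{restrictGamma}), the identity $\mathcal{E}(\rho) = \mathcal{S}(\rho_1) = -\tr \Gamma_{\rho_1}^{\mathcal{C}_1} \log \Gamma_{\rho_1}^{\mathcal{C}_1}$ is immediate. The technical heart of both verifications is the relation between global and local Jordan--Wigner operators. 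Setting $U := \sigma_1^Z \cdots \sigma_{r-1}^Z$ (with $U := \I$ when $r = 1$), a direct comparison of (\ref{eq:cjs}) with (\ref{locJW}) gives $c_j = U \cdot (c_j^{(1)} \otimes \I_{\mathcal{H}_2})$ for every $j \in \Lambda_1$. Three features of $U$ will be used throughout: $U$ is unitary, self-adjoint, and $U^2 = \I$; $U$ acts nontrivially only on the $\mathcal{H}_2$-tensor factor, so it commutes with every operator of the form $A \otimes \I_{\mathcal{H}_2}$ with $A \in \mathcal{B}(\mathcal{H}_1)$; and via $\sigma_j^Z = 2 c_j^* c_j - \I$, $U$ is a polynomial in $\{c_j, c_j^*\}_{j < r}$ each of whose monomials has even total Fermionic degree.

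For the correlation-matrix identity (\ref{restrictGamma}), fix $j, k \in \Lambda_1$ and compute $c_j c_k^* = U \bigl((c_j^{(1)} (c_k^{(1)})^*) \otimes \I_{\mathcal{H}_2}\bigr) U = (c_j^{(1)} (c_k^{(1)})^*) \otimes \I_{\mathcal{H}_2}$, using $U^2 = \I$ and that $U$ commutes past the $\mathcal{H}_1$-local factor. The analogous identities hold for $c_j c_k$, $c_j^* c_k$, and $c_j^* c_k^*$. Combining this with the partial-trace relation $\Tr_{\mathcal{H}}\bigl((A \otimes \I_{\mathcal{H}_2}) \rho\bigr) = \Tr_{\mathcal{H}_1}(A \rho_1)$ shows that each $2\times 2$ block of $\Gamma_\rho^{\mathcal{C}}$ with both indices in $\Lambda_1$ agrees with the corresponding block of $\Gamma_{\rho_1}^{\mathcal{C}_1}$, which is precisely (\ref{restrictGamma}).

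To show $\rho_1$ is quasi free with respect to $\mathcal{C}_1$ I verify Wick's rule (\ref{quasifree}) for arbitrary products $\prod_{k=1}^m D_{\mathcal{C}_1}(f_k, g_k)$ with $f_k, g_k : \Lambda_1 \to \C$. Linearity in $c_j = U c_j^{(1)}$ gives $D_{\mathcal{C}}(f_k, g_k) = U D_{\mathcal{C}_1}(f_k, g_k)$ (with $D_{\mathcal{C}_1}$ canonically extended to $\mathcal{H}$), and commuting the $U$'s past the $\mathcal{H}_1$-local factors yields $\prod_{k=1}^m D_{\mathcal{C}}(f_k, g_k) = U^m \prod_{k=1}^m D_{\mathcal{C}_1}(f_k, g_k)$. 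For even $m$ the factor $U^m$ equals $\I$, so partial-trace duality gives $\langle \prod D_{\mathcal{C}_1} \rangle_{\rho_1} = \langle \prod D_{\mathcal{C}} \rangle_\rho$; by quasi-freeness of $\rho$ this is a Pfaffian in the two-point functions $\langle D_{\mathcal{C}}(f_r, g_r) D_{\mathcal{C}}(f_s, g_s) \rangle_\rho$, and each such two-point function coincides with $\langle D_{\mathcal{C}_1}(f_r, g_r) D_{\mathcal{C}_1}(f_s, g_s) \rangle_{\rho_1}$ by the $m = 2$ case of the same computation, giving exactly Wick's rule for $\rho_1$. For odd $m$ the identity $\prod D_{\mathcal{C}_1} \otimes \I_{\mathcal{H}_2} = U \prod D_{\mathcal{C}}$ combined with partial-trace duality produces $\langle \prod D_{\mathcal{C}_1} \rangle_{\rho_1} = \langle U \prod D_{\mathcal{C}} \rangle_\rho$; expanding $U$ as an even-parity polynomial in $\mathcal{C}$ writes this expectation as a sum of expectations of monomials of odd total degree in $\{c_j, c_j^*\}_{j=1}^n$, each of which vanishes under the quasi-free state $\rho$ (by Wick's rule applied to delta-supported $f, g$, so that every $D_{\mathcal{C}}$ reduces to a single $c_j$ or $c_j^*$). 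Hence $\langle \prod D_{\mathcal{C}_1} \rangle_{\rho_1} = 0$.

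The main obstacle I anticipate is exactly this non-locality introduced by the Jordan--Wigner transform: for an interior interval $\Lambda_1$ with $r > 1$, the operators $c_j$ and $c_j^{(1)}$ differ by the $\sigma^Z$-string $U$ on the sites to the left of $\Lambda_1$, and a naive identification of the two systems would fail. The key observation that makes the argument go through is that $U$ has even Fermionic parity and squares to the identity, so its contributions cancel identically in even products (which carry all of the nontrivial Wick pairings) and vanish by parity in odd products (where Wick's rule predicts zero anyway). With these two ingredients in place, Lemma~\ref{lem:quasifree} applied to $(\rho_1, \mathcal{C}_1)$ on $\mathcal{H}_1$ yields (\ref{eq:EEmain}), completing the proof.
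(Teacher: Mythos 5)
Your proposal is correct and follows essentially the same route as the paper: you identify $c_j = \sigma_1^Z\cdots\sigma_{r-1}^Z\,\tilde c_j$, use that the string squares to the identity and commutes with $\Lambda_1$-local operators to match the quadratic expectations (giving (\ref{restrictGamma})), and then verify Wick's rule for $\rho_1$ by the same even/odd split — even products of strings cancel, while for odd $m$ the leftover string is an even product of Fermionic operators so the total expectation vanishes by quasi-freeness of $\rho$. The paper packages these steps as Lemmas~\ref{lemma:app:relations}--\ref{thm:Wicks:rho1}, but the substance is identical.
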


The proof of this theorem fills the rest of this section. Let
\begin{equation}
\tilde{c}_{j}:=\I^{\otimes (r-1)}\otimes c^{(1)}_{j}\otimes \I^{\otimes (n-\ell-r+1)}, \quad j\in \Lambda_1,
\end{equation}
be the extensions of the operators $c^{(1)}_{j}$ to $\mathcal{H}$. The $\tilde{c}_j$ are local observables for the subsystem $\Lambda_1$ and differ from the $c_j$ by products of `left-end' Pauli matrices,
\begin{equation} \label{eq:ctildec}
c_j = \sigma^Z_1 \sigma^Z_2\ldots \sigma^Z_{r-1} \tilde{c}_{j} = \tilde{c}_j \sigma^Z_1 \sigma^Z_2\ldots \sigma^Z_{r-1}, \quad j \in \Lambda_1.
\end{equation}
To describe the difference between $c_j$ and $\tilde{c}_j$ further, for $f:\Lambda_1\rightarrow \mathbb{C}$ we set
\begin{equation}
 \tilde{c}_{r}(f)=\sum_{j\in\Lambda_1} \bar{f}_j\tilde{c}_{j},
\quad c_r(f)=\sum_{j\in\Lambda_1} \overline{f}_j c_{j},
\end{equation}
and denote their adjoints by $\tilde{c}_{r}^*(f)$ and $c_r^*(f)$. For pairs $f,g:\Lambda_1\rightarrow \mathbb{C}$ we also write
\begin{equation} \label{Cops}
 \tilde{C}_r(f,g)=\tilde{c}_r(f)+\tilde{c}_r^*(g),\quad C_r(f,g)=c_r(f)+c_r^*(g).
\end{equation}

From (\ref{eq:ctildec}) and basic properties of the Pauli matrices, in particular $(\sigma_1^Z \ldots \sigma_{r-1}^Z)^2= \I$, we easily get

\begin{lem} \label{lemma:app:relations}
Let $f,g, f_1, g_1, f_2, g_2:\Lambda_1\rightarrow \mathbb{C}$, then

(a) $\tilde{C}_{r}(f,g)= \sigma^Z_1 \sigma^Z_2\ldots \sigma^Z_{r-1} C_{r}(f,g)$,

 (b) $\tilde{C}_r(f_1,g_1)\tilde{C}_{r}(f_2,g_2)=C_r(f_1,g_1)C_{r}(f_2,g_2)$
\end{lem}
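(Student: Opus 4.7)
The plan is to derive both identities directly from equation (\ref{eq:ctildec}) together with the involution property $(\sigma_1^Z \cdots \sigma_{r-1}^Z)^2 = \I$, treating everything as formal bookkeeping.

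For part (a), I first invert the relation in (\ref{eq:ctildec}) by left-multiplying $c_j = \sigma_1^Z \cdots \sigma_{r-1}^Z \tilde{c}_j$ by $\sigma_1^Z \cdots \sigma_{r-1}^Z$, yielding $\tilde{c}_j = \sigma_1^Z \cdots \sigma_{r-1}^Z c_j$ for every $j \in \Lambda_1$. Taking adjoints, and using self-adjointness of the $\sigma_k^Z$ together with the fact that each $c_j$ (hence each $c_j^*$) commutes with $\sigma_1^Z \cdots \sigma_{r-1}^Z$ (verified below), I get $\tilde{c}_j^* = \sigma_1^Z \cdots \sigma_{r-1}^Z c_j^*$. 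Linearity in $f$ and $g$ then collapses the definition (\ref{Cops}) to
\begin{equation*}
\tilde{C}_r(f,g) = \sum_{j\in\Lambda_1} \bar{f}_j \tilde{c}_j + \sum_{j\in\Lambda_1} g_j \tilde{c}_j^* = \sigma_1^Z \cdots \sigma_{r-1}^Z \, C_r(f,g),
\end{equation*}
which is (a).

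For part (b), I apply (a) to both factors to obtain
\begin{equation*}
\tilde{C}_r(f_1,g_1)\, \tilde{C}_r(f_2,g_2) = \sigma_1^Z \cdots \sigma_{r-1}^Z \, C_r(f_1,g_1) \, \sigma_1^Z \cdots \sigma_{r-1}^Z \, C_r(f_2,g_2).
\end{equation*}
Since $C_r(f_1,g_1)$ is a linear combination of the $c_j$ and $c_j^*$ with $j\in\Lambda_1$, each of which commutes with the Pauli-$Z$ string, I can pull $C_r(f_1,g_1)$ past the middle string and collapse $(\sigma_1^Z \cdots \sigma_{r-1}^Z)^2 = \I$ to land on $C_r(f_1,g_1)\, C_r(f_2,g_2)$.

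No step is genuinely hard; the whole lemma is short bookkeeping built on (\ref{eq:ctildec}) and the idempotence of the Pauli-$Z$ string. The one point deserving a moment of care will be verifying $[c_j, \sigma_1^Z \cdots \sigma_{r-1}^Z] = 0$ for $j \in \Lambda_1$. This I would get by noting that $\tilde{c}_j$ acts as the identity on sites $1,\ldots,r-1$, so that $\sigma_1^Z \cdots \sigma_{r-1}^Z$ commutes with $\tilde{c}_j$; then left- or right-multiplying $c_j = \sigma_1^Z \cdots \sigma_{r-1}^Z \tilde{c}_j$ by the string and using $(\sigma_1^Z \cdots \sigma_{r-1}^Z)^2 = \I$ gives $\sigma_1^Z \cdots \sigma_{r-1}^Z \, c_j = \tilde{c}_j = c_j \, \sigma_1^Z \cdots \sigma_{r-1}^Z$, the desired commutation.
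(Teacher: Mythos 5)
Your proof is correct and follows exactly the route the paper intends: the paper gives no written proof, simply asserting that the lemma follows ``easily'' from (\ref{eq:ctildec}) and $(\sigma_1^Z\cdots\sigma_{r-1}^Z)^2=\I$, and your argument (inverting (\ref{eq:ctildec}), taking adjoints, using linearity, and using the commutation of the $Z$-string with $c_j$ and $c_j^*$ --- which is already encoded in the two equalities of (\ref{eq:ctildec})) is precisely that bookkeeping spelled out.
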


With this we reach our first goal:

\begin{lem}\label{lemma:corr:RestofC}
The correlation matrix $\Gamma_{\rho_1}^{\mathcal{C}_1}$ of the reduced state $\rho_{1} =\Tr_{\mathcal{H}_2} \rho$ with respect to the local Jordan-Wigner Fermionic system $\mathcal{C}_1$ on $\mathcal{H}_1$ satisfies (\ref{restrictGamma}).
\end{lem}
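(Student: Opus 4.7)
The plan is to show the required identification of $2\times 2$ block matrix entries directly. For each $j,k \in \Lambda_1$, the $(j,k)$ block of $\Gamma_{\rho_1}^{\mathcal{C}_1}$ consists of the four expectations $\langle c_j^{(1)}(c_k^{(1)})^*\rangle_{\rho_1}$, $\langle c_j^{(1)} c_k^{(1)}\rangle_{\rho_1}$, $\langle (c_j^{(1)})^*(c_k^{(1)})^*\rangle_{\rho_1}$, and $\langle (c_j^{(1)})^* c_k^{(1)}\rangle_{\rho_1}$, whereas the $(j,k)$ block of $\Gamma_\rho^{\mathcal{C}}$ consists of the analogous expectations in $\rho$ with $c_j^{(1)}$ replaced by $c_j$. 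I would show that each of these four pairs of expectations agrees.

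First I would invoke the defining property of the partial trace: for any $A \in \mathcal{B}(\mathcal{H}_1)$,
\begin{equation*}
\Tr_{\mathcal{H}_1}(A \rho_1) = \Tr_{\mathcal{H}}((A \otimes \I_{\mathcal{H}_2}) \rho).
\end{equation*}
Applied to $A$ being one of the four quadratic expressions in $c_j^{(1)}, c_k^{(1)}$, this turns every expectation in $\rho_1$ into an expectation in $\rho$ of the corresponding operator $\tilde{c}_j \tilde{c}_k^*$, $\tilde{c}_j \tilde{c}_k$, $\tilde{c}_j^* \tilde{c}_k^*$, or $\tilde{c}_j^* \tilde{c}_k$, since by construction $c_j^{(1)}(c_k^{(1)})^* \otimes \I_{\mathcal{H}_2}$ acts on $\mathcal{H}$ as $\tilde{c}_j \tilde{c}_k^*$ (the identity factors on the left and right do not interact with the local operators on $\Lambda_1$), and analogously for the other three products.

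Second, I would eliminate the tildes using Lemma~\ref{lemma:app:relations}(b). Writing $\tilde{c}_j = \tilde{C}_r(e_j,0)$ and $\tilde{c}_k^* = \tilde{C}_r(0,e_k)$ in the notation of (\ref{Cops}) (with $e_j$, $e_k$ the standard basis vectors on $\Lambda_1$), the lemma gives $\tilde{c}_j \tilde{c}_k^* = c_j c_k^*$, and similarly $\tilde{c}_j \tilde{c}_k = c_j c_k$, $\tilde{c}_j^* \tilde{c}_k^* = c_j^* c_k^*$, and $\tilde{c}_j^* \tilde{c}_k = c_j^* c_k$. The reason is transparent from Lemma~\ref{lemma:app:relations}(a): each tilded operator differs from its non-tilded counterpart by a left factor of $\sigma^Z_1 \cdots \sigma^Z_{r-1}$, and this factor squares to the identity in the product of two such operators.

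Combining the two steps yields $\langle c_j^{(1)}(c_k^{(1)})^*\rangle_{\rho_1} = \langle c_j c_k^*\rangle_\rho$, and the same identification for the other three bilinears, which is exactly the claim that the $(j,k)$ block of $\Gamma_{\rho_1}^{\mathcal{C}_1}$ equals the $(j,k)$ block of $\Gamma_\rho^{\mathcal{C}}$ for all $j,k \in \Lambda_1$. There is no real obstacle here; the only point requiring any care is bookkeeping the correspondence between the $2 \times 2$ block structure of the correlation matrix and the four quadratic monomials in $c_j, c_k$, and checking that the Jordan-Wigner string $\sigma^Z_1 \cdots \sigma^Z_{r-1}$ always appears an even number of times in the products involved, so that it drops out.
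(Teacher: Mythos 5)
Your proof is correct and follows essentially the same route as the paper's: both arguments convert each $\rho_1$-expectation of a quadratic monomial in the local operators into a $\rho$-expectation of the corresponding $\tilde{c}$-monomial via the defining property of the partial trace, and then strip the Jordan--Wigner string using Lemma~\ref{lemma:app:relations}(b), since the string appears squared in any product of two operators. No gaps; the paper likewise only writes out one of the four matrix entries and notes that the other three follow identically.
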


\begin{proof}
That the upper right elements of the $2\times2$ matrix-elements in each correlation matrix in (\ref{restrictGamma}) coincide is seen as follows: For $j,k\in\Lambda_1$,
\begin{eqnarray}
\langle{c_{j}^{(1)}} {c_{k}^{(1)}} \rangle_{\rho_{1}} & = &\Tr \left({c_{j}^{(1)}}{c_{k}^{(1)}}\rho_{1}\right) =  \Tr \left({c_{j}^{(1)}}{c_{k}^{(1)}}\Tr_{\mathcal{H}_2}\rho\right) \\
& = & \Tr \left(\Tr_{\mathcal{H}_2}\tilde{c}_{j} \tilde{c}_{k} \rho\right)
= \Tr \left(c_{j} c_{k} \rho\right)
= \langle c_{j} c_{k} \rangle_{\rho}. \nonumber
\end{eqnarray}
Here we have used in the third step that
\begin{equation}
{c_{j}^{(1)}}{c_{k}^{(1)}} \Tr_{\mathcal{H}_2} \rho = \Tr_{\mathcal{H}_2}\left(\I\otimes {c_{j}^{(1)}} {c_{k}^{(1)}} \otimes \I\right)\rho = \Tr_{\mathcal{H}_2} \tilde{c}_j \tilde{c}_k \rho
\end{equation}
and in the fourth step that $\tilde{c}_j \tilde{c}_k = c_j c_k$ by Lemma~\ref{lemma:app:relations}(b).

Identity of the other three elements of the $2\times2$ matrix-elements follows in the same way when replacing $c_j^{(1)}$ and/or $c_k^{(1)}$ by $(c_j^{(1)})^*$ and/or $(c_k^{(1)})^*$.
\end{proof}

In order to conclude that  $\rho_{1}$ can be determined from the correlation matrix $\Gamma_{\rho_{1}}^{\mathcal{C}_1}$, we prove that is is quasi free with respect to $\mathcal{C}_1$:

\begin{lem}\label{thm:Wicks:rho1}
If $\rho$ is quasi free with respect to $\mathcal{C}$, then the reduced state $\rho_1=\Tr_{\mathcal{H}_2}\rho$ is quasi free with respect to $\mathcal{C}_1$.
\end{lem}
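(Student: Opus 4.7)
The plan is to verify the defining identity~(\ref{quasifree}) for $\rho_1$ with respect to $\mathcal{C}_1$ directly, by rewriting each multi-point correlation $\left\langle \prod_{j=1}^m D_j^{(1)} \right\rangle_{\rho_1}$ on $\mathcal{H}_1$ as an expectation on $\mathcal{H}$ against $\rho$, and then invoking quasi-freeness of $\rho$ with respect to $\mathcal{C}$. The basic tool is the partial-trace identity
\[
  \langle A \rangle_{\rho_1} = \Tr_{\mathcal{H}_1}(A\rho_1) = \Tr_{\mathcal{H}}(\tilde{A}\rho) = \langle \tilde{A}\rangle_{\rho},
\]
where $\tilde{A} = \I^{\otimes (r-1)} \otimes A \otimes \I^{\otimes(n-r-\ell+1)}$ is the canonical extension of $A\in\mathcal{B}(\mathcal{H}_1)$ to $\mathcal{H}$. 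Applied with $A = \prod_{j=1}^m D_j^{(1)}$, each factor lifts to $\tilde{D}_j := \tilde{C}_r(f_j,g_j)$ in the notation of (\ref{Cops}).

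Next, Lemma~\ref{lemma:app:relations}(a) gives $\tilde{D}_j = U D_j$ with $U := \sigma_1^Z \cdots \sigma_{r-1}^Z$ and $D_j := C_r(f_j,g_j)$. Because $U$ acts only on sites strictly below $r$ while each $D_j$ is a linear combination of $c_i, c_i^*$ with $i \in \Lambda_1$, a short check using that $\sigma_k^Z$ and $c_i$ commute for $k\neq i$ shows $[U,D_j]=0$; combined with $U^2=\I$ this yields
\[
  \prod_{j=1}^m \tilde{D}_j = U^m \prod_{j=1}^m D_j.
\]
The analysis then splits on the parity of $m$.

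When $m$ is even, $U^m = \I$ and hence $\langle \prod_j D_j^{(1)} \rangle_{\rho_1} = \langle \prod_j D_j \rangle_{\rho}$. The Wick formula (\ref{quasifree}) for $\rho$ with respect to $\mathcal{C}$ expands the right side as a sum $\sum_{k=2}^m (-1)^k \langle D_k D_1 \rangle_{\rho} \,\langle \prod_{j\notin\{1,k\}} D_j \rangle_{\rho}$. The identity $\langle D_k D_1\rangle_{\rho} = \langle D_k^{(1)} D_1^{(1)}\rangle_{\rho_1}$ is just the $m=2$ instance of the conversion above (where $U^2=\I$ applies), and the remaining $(m-2)$-factor expectations fall under the same even-$m$ argument iteratively. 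Reassembling, one obtains exactly the Wick rule for $\rho_1$ with respect to $\mathcal{C}_1$.

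The odd case is the main obstacle: we need $\langle U \prod_{j=1}^m D_j \rangle_{\rho} = 0$. The key trick is to realize $U$ itself as a string of $D$-operators for $\mathcal{C}$. From $\sigma_i^Z = \I - 2 c_i^* c_i = -(c_i+c_i^*)(c_i-c_i^*) = -D(e_i,e_i)\,D(e_i,-e_i)$, with $e_i$ the $i$-th standard basis vector on $\Lambda$, $U$ becomes a product of $2(r-1)$ operators of the form $D(f,g)$ supported on $\{1,\ldots,r-1\}$ (the ordering is inessential since bilinears on distinct sites commute). Consequently $U \prod_{j=1}^m D_j$ is a product of $2(r-1)+m$ operators of type $D(f,g)$ with $f,g:\Lambda\to\C$, whose total length is odd precisely when $m$ is odd. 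The odd-$m$ clause of (\ref{quasifree}) applied to $\rho$ then forces this expectation to vanish, completing the verification of quasi-freeness of $\rho_1$ with respect to $\mathcal{C}_1$.
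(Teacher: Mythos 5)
Your argument is correct and takes essentially the same route as the paper's own proof: the same partial-trace reduction to $\rho$-expectations of the operators $\tilde{C}_r(f_j,g_j)$, the same pairwise cancellation of the Jordan--Wigner strings for even $m$ (your $U^m=\I$ versus the paper's Lemma~\ref{lemma:app:relations}(b)), and the same key trick of realizing $\sigma_1^Z\cdots\sigma_{r-1}^Z$ as an even product of $D$-type operators so that the odd-$m$ expectation vanishes by quasi-freeness of $\rho$. Only a harmless sign slip: with the paper's convention $a=\tfrac12(\sigma^X-i\sigma^Y)$ one has $\sigma_i^Z=2c_i^*c_i-\I=(c_i+c_i^*)(c_i-c_i^*)$ rather than $\I-2c_i^*c_i$, but the overall sign plays no role in the vanishing argument.
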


\begin{proof}
We need to prove (\ref{quasifree}) for $\rho_{1}$-expectations of the operators
\begin{equation}
C^{(1)}(f,g)=c^{(1)}(f)+(c^{(1)}(g))^*, \quad c^{(1)}(f)=\sum_{j\in\Lambda_1} \bar{f}_j c^{(1)}_{j}
\end{equation}
on $\mathcal{H}_1$, where $f, g: \Lambda_1 \to \C$. The latter reduce to $\rho$-expectations of the operators $\tilde{C}_r(f,g)$ defined in (\ref{Cops}): For any positive integer $m$ and functions $f_j,g_j:\Lambda_1\rightarrow\mathbb{C}$ for $1\leq j\leq m$ we have,
\begin{eqnarray} \label{Wick1}
\left\langle\prod_{j=1}^m C^{(1)}(f_j,g_j)\right\rangle_{\rho_{1}}&=&
\Tr\left( \prod_{j=1}^m C^{(1)}(f_j,g_j)\Tr_{\mathcal{H}_2}\rho\right)
\\ & = & \Tr \left(\prod_{j=1}^m \tilde{C}_r(f_j,g_j)\rho\right)
= \left\langle\prod_{j=1}^m \tilde{C}_r(f_j,g_j)\right\rangle_{\rho}, \nonumber
\end{eqnarray}
which uses the general fact $\Tr_{\mathcal{H}_2} (B \otimes I)A = B \Tr_{\mathcal{H}_2} A$ for $A\in \mathcal{B}(\mathcal{H})$ and $B\in \mathcal{B}(\mathcal{H}_1)$ in the second step.

{\bf Case 1:} If $m$ is even, then using Lemma \ref{lemma:app:relations} we get that
\begin{equation}\label{eq:app:evenproductofc1s}
\prod_{j=1}^m \tilde{C}_r(f_j,g_j)=\prod_{j=1}^m C_r(f_j,g_j).
\end{equation}
Note that the operators $C_r(f_j,g_j)$ are of the form $C(\tilde{f}_j, \tilde{g}_j)$ where $\tilde{f}_j$, $\tilde{g}_j$ are the extensions of $f_j$, $g_j$ by zeros to $\Lambda$. Thus we can use that $\rho$ is quasi free with respect to $\mathcal{C}$, i.e.\ Lemma \ref{lemma:Wicks:diagonal}:
\begin{equation} \label{Wick2}
\left\langle\prod_{j=1}^m C^{(1)}(f_j,g_j)\right\rangle_{\rho_{1}}=
\left\langle\prod_{j=1}^m C_r(f_j,g_j)\right\rangle_{\rho}=\pf\left(C_r^{(\rho,m)}\right),
\end{equation}
where $C_r^{(\rho,m)}$ is the $m\times m$ anti-symmetric matrix with entries
\begin{equation}
\left[C_r^{(\rho,m)}\right]_{j,k}=\langle C_r(f_k,g_k)C_r(f_j,g_j)\rangle_{\rho}
\end{equation}
for $1\leq j<k\leq n$ and properly extended by antisymmetry.
Arguing as in (\ref{Wick1}) and (\ref{eq:app:evenproductofc1s}), but now with only two factors,
\begin{equation}
\langle C_r(f_j,g_j)C_r(f_k,g_k)\rangle_{\rho}=\langle \widetilde{C}_r(f_j,g_j)\widetilde{C}_r(f_k,g_k)\rangle_{\rho}=\langle C^{(1)}(f_j,g_j)C^{(1)}(f_k,g_k)\rangle_{\rho_{1}}.
\end{equation}
Combined with (\ref{Wick2}) this shows that Wick's rule is satisfied if $m$ is even.

{\bf Case 2:} If $m$ is odd, then by (\ref{Wick1}), (\ref{eq:app:evenproductofc1s}) and Lemma~\ref{lemma:app:relations} we get
\begin{eqnarray} \label{oddcalc}
\left\langle\prod_{j=1}^m C^{(1)}(f_j,g_j)\right\rangle_{\rho_{1}}&=&
\left\langle\tilde{C}_r(f_m,g_m)\prod_{j=1}^{m-1} \tilde{C}_r(f_j,g_j)\right\rangle_{\rho} \\
&=& \left\langle\tilde{C}_r(f_m,g_m)\prod_{j=1}^{m-1} C_r(f_j,g_j)\right\rangle_{\rho} \nonumber \\
&=&\left\langle \left(\sigma_1^Z \sigma_2^Z\ldots\sigma_{r-1}^Z\right)C_r(f_m,g_m)\prod_{j=1}^{m-1} C_r(f_j,g_j)\right\rangle_{\rho} \nonumber \\
&=& \left\langle \left(\sigma_1^Z \sigma_2^Z\ldots\sigma_{r-1}^Z\right)\prod_{j=1}^{m} C_r(f_j,g_j)\right\rangle_{\rho}. \nonumber
\end{eqnarray}
Now note that
\begin{equation}
\sigma_1^Z \sigma_2^Z\ldots\sigma_{r-1}^Z=\prod_{j=1}^{r-1}(c_j+c_j^*)(c_j-c_j^*).
\end{equation}
Thus $\sigma_1^Z \sigma_2^Z\ldots\sigma_{r-1}^Z$ is an even product of operators $C(\hat{f}_j,\hat{g}_j)$ with suitable choices of $\hat{f}_j$ and $\hat{g}_j$. Since $\rho$ is quasi free with respect to $\mathcal{C}$, we conclude that the right hand side of (\ref{oddcalc}) is zero.
\end{proof}

\vspace{.3cm}

Combining Lemma~\ref{thm:Wicks:rho1} and Lemma~\ref{lem:quasifree} gives
\begin{equation}
\mathcal{E}(\rho) = \mathcal{S}(\rho_{1}) = - \tr \Gamma_{\rho_{1}}^{\mathcal{C}_1} \log \Gamma_{\rho_{1}}^{\mathcal{C}_1},
\end{equation}
which completes the proof Theorem~\ref{thm:EErule}.

\section{An Area Law for the Eigenstates} \label{sec:arealaw}

In this section we {\it prove} Theorem~\ref{mainthm}, our main result.

Note first that, by the non-degeneracy assumption (\ref{simple}), it holds with probability  one that all eigenprojectors $\rho_{\psi}$ of $H$ are of the form $\rho_{\alpha} = |\psi_{\alpha}\rangle \langle \psi_{\alpha}|$. Therefore (\ref{arealaw}) is equivalent to
\begin{equation} \label{arealaw2}
\mathbb{E} \left( \max_{\alpha} \mathcal{E}(\rho_{\alpha}) \right) \le \tilde{C} < \infty
\end{equation}
uniformly in $n$, $r$ and $\ell$, meaning we can work with the Fermion basis given by (\ref{eq:PsiAlpha}).

For fixed $\alpha$, set $\Gamma_1 := \Gamma_{(\rho_\alpha)_1}^{\mathcal{C}_1}$ the correlation matrix of $(\rho_{\alpha})_1:=\Tr_{\mathcal{H}_2}\rho_\alpha$ with respect to $\mathcal{C}_1$. $\Gamma_1$ is the restriction (\ref{restrictGamma}) of $\Gamma := \Gamma_{\rho_{\alpha}}^{\mathcal{C}} = \chi_{\Delta_{\alpha}}(M)$. Let the eigenvalues of $\Gamma_1$ be $\xi_j$, $1-\xi_j$, $j=1,\ldots, \ell$, with $0 \le \xi_j \le 1/2$. 

The following is a calculation essentially taken from \cite{PasturSlavin}, extended to the more general type of correlation matrices needed here to cover quadratic Fermion Hamiltonians of the form (\ref{eq:Crep}).

By Theorem~\ref{thm:EErule} we have
\begin{eqnarray} \label{longcalc1}
\mathcal{E}(\rho_{\alpha}) & = & - \tr \Gamma_1 \log \Gamma_1 \\
& = & - \sum_{j=1}^\ell \left( \xi_j \log \xi_j + (1-\xi_j) \log (1-\xi_j) \right) \nonumber \\
& \le & 2 \log 2 \sum_{j=1}^\ell \sqrt{\xi_j (1-\xi_j)} \nonumber \\
& = & \log 2 \tr (\Gamma_1(\I -\Gamma_1))^{1/2}, \nonumber
\end{eqnarray}
where we have used the elementary inequality
\begin{equation}
-x\log x-(1-x)\log (1-x)\leq 2\log 2 \sqrt{x(1-x)},\ \ \text{for}\ \  0<x<1.
\end{equation}
The Peierls-Bogoliubov inequality, see Section 8.3 in  \cite{SimonTrace}, says that
\begin{equation}
\tr f(A) \geq \sum_{j=1}^m f(A_{jj})
\end{equation}
for any convex function $f$ and $m\times m$ hermitian matrix $A$. Using this with $f(x)=-\sqrt{x}$ as well as the elementary inequality $\sqrt{x}+\sqrt{y}\leq \sqrt{2}\sqrt{x+y}$, we may further bound (\ref{longcalc1}) by
\begin{equation}\label{longcalc3}
\mathcal{E}(\rho_{\alpha}) \leq \sqrt{2} \log 2 \sum_{j\in \Lambda_1} \left( \tr (\Gamma_1 (\I - \Gamma_1))_{jj} \right)^{1/2},
\end{equation}
where matrix elements should be understood as $2\times 2$-matrices.

Now, since $\Gamma$ is an orthogonal projection, we use $\Gamma^2=\Gamma$ with block matrix multiplication to get,
\begin{equation}
\Gamma_{jj}=\Gamma_{jj}^2+\sum_{\tiny \begin{array}{c}
                                     k\in\Lambda_1 \\
                                     j\neq k
                                   \end{array}} \Gamma_{jk}\Gamma_{kj}+
                                   \sum_{k\in\Lambda\backslash\Lambda_1}
                                   \Gamma_{jk}\Gamma_{kj}.
\end{equation}
Then, for $j\in\Lambda_1$,
\begin{equation}\label{eq:arealaw:pf:2}
\left(\Gamma_1(\I-\Gamma_1)\right)_{jj}
=\Gamma_{jj}(\I-\Gamma_{jj})-\sum_{\tiny\begin{array}{c}
             k \in\Lambda_1 \\
             k \neq j
           \end{array}}
\Gamma_{jk}\Gamma_{kj}
=\sum_{k\in\Lambda \setminus \Lambda_{1}}\Gamma_{jk}
\Gamma_{kj} = \sum_{k\in\Lambda \setminus \Lambda_{1}} \Gamma_{jk}
(\Gamma_{jk})^t,
\end{equation}
where symmetry of $\Gamma$ was used (note that $\Gamma_1$ is real-valued, see (\ref{corchange}), where $W$ is real). Inserting this into (\ref{longcalc3}), using $\sqrt{x+y} \le \sqrt{x}+\sqrt{y}$ as well as $\tr \Gamma_{jk} (\Gamma_{jk})^t \le 2\|\Gamma_{jk}\|^2$, we find
\begin{equation}
\mathcal{E}(\rho_\alpha)\leq  2\log 2 \sum_{\textbf{j}\in\Lambda_1}\sum_{\textbf{k}\in\Lambda\backslash\Lambda_1}
\|\Gamma_{jk}\|.
\end{equation}
Maximizing over $\alpha$ and averaging gives
\begin{equation} \label{boundbyproj}
\mathbb{E}\left(\sup_\alpha\mathcal{E}(\rho_\alpha)\right)\leq  2\log 2 \sum_{j \in\Lambda_1}\sum_{k\in\Lambda\backslash\Lambda_1}
\mathbb{E}\left(\sup_\alpha\left\|\left[\chi_{\Delta_\alpha}(M)\right]_{jk}\right\|\right)
\end{equation}
Now, by assumption (\ref{ecloc}),
\begin{equation}
\mathbb{E}\left(\sup_\alpha\left\|\left[\chi_{\Delta_\alpha}(M)\right]_{jk}\right\|\right)\leq
\mathbb{E}\left(\sup_{|g|\leq 1}\left\|\left[g(M)\right]_{jk}\right\|\right)\leq \frac{C}{1+|j-k|^\beta}
\end{equation}
for some $\beta>2$. We have
\begin{eqnarray} \label{eq:proof:AreaLaw:last}
\sum_{j\in\Lambda_1}\sum_{k\in\Lambda\backslash\Lambda_1}\frac{1}{1+|j-k|^\beta} & \le & \sum_{j=1}^{\ell} \sum_{k \in \Z \setminus \{1,\ldots,\ell\}} \frac{1}{1+|j-k|^{\beta}} \\
& = & 2 \sum_{j=1}^{\ell} \sum_{k=\ell+1}^{\infty} \frac{1}{1+(k-j)^{\beta}} \nonumber \\
& \le & 2 \sum_{j=1}^{\ell} \sum_{k=\ell+1}^{\infty} \frac{1}{\sqrt{1+2(\ell-j)^{\beta}}} \frac{1}{\sqrt{1+2(k-\ell)^{\beta}}} \nonumber \\
& \le & 2 \left( \sum_{j=0}^{\infty} \frac{1}{\sqrt{1+2j^{\beta}}} \right)^2 < \infty. \nonumber
\end{eqnarray}
This gives the uniform boundedness of (\ref{boundbyproj}) in $n$, $r$ and $\ell$, and thus completes the proof of Theorem~\ref{mainthm}.

\appendix

\section[Non-Degeneracy of Eigenvalues]{Non-Degeneracy of Eigenvalues}\label{App:Non-degeneracy}

Here we will prove that the non-degeneracy assumption (\ref{simple}) holds under the condition that the random variables $\nu_j$, $j\in \N$, are i.i.d.\ with absolutely continuous distribution and that $\mu_j$, $\gamma_j$, $j\in \N$, are independent of the $\nu_j$. In particular, this covers the applications discussed in Section~\ref{sec:Applications}. In this case, almost sure non-degeneracy of the eigenvalues of $H$ will follow from Proposition~\ref{prop:nondeg} below.

Towards this, let $A$ is a hermitian $n\times n$-matrix, $B$ an anti-hermitian $n\times n$-matrix, $A(\nu) = A + \mbox{diag}(\nu)$ for $\nu = (\nu_1, \ldots, \nu_n) \in \R^n$ and
\begin{equation}
\tilde{M}(\nu) = \left( \begin{array}{cc} A(\nu) & B \\ -B & -A(\nu) \end{array} \right).
\end{equation}
As $\tilde{M}(\nu)$ and $-\tilde{M}(\nu)$ are unitarily equivalent, $\tilde{M}(\nu)$ has $n$ pairs $\pm \lambda_j$, $j=1,\ldots,n$, of eigenvalues, where we may choose $0 \le \lambda_1 \le \ldots \le \lambda_n$. As discussed in Section~\ref{sec:JordanWigner}, the $2^n$ eigenvalues of the corresponding many-body Hamiltonian $H(\nu) = {\mathcal C}^* P^t \tilde{M}(\nu) P {\mathcal C}$ are
\begin{equation}
E_{\alpha} := \sum_{j:\alpha_j = 1} \lambda_j - \sum_{j:\alpha_j=0} \lambda_j
\end{equation}
for multi-indices $\alpha = (\alpha_1, \ldots, \alpha_n) \in \{0,1\}^n$.

\begin{proposition} \label{prop:nondeg}
For Lebesgue-almost every $\nu = (\nu_1, \ldots, \nu_n) \in \R^n$ the $2^n$ eigenvalues $\{E_{\alpha}: \alpha \in \{0,1\}^n\}$ of $H(\nu)$ are pairwise distinct.
\end{proposition}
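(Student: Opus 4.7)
The plan is to reduce almost-sure simplicity of the spectrum of $H(\nu)$ to the non-vanishing of a single polynomial in $\nu$, and then exhibit one parameter value at which that polynomial is nonzero. Let $D(\nu)$ denote the discriminant of the characteristic polynomial of $H(\nu)$, so that $D(\nu) = 0$ iff $H(\nu)$ has a repeated eigenvalue. Since $H(\nu) = \mathcal{C}^* P^t \tilde{M}(\nu) P \mathcal{C}$ has matrix entries depending polynomially on $\nu$, the coefficients of its characteristic polynomial are polynomials in $\nu$, and thus so is the discriminant $D$. Because a not-identically-zero polynomial on $\R^n$ vanishes on a Lebesgue-null set, the claim reduces to exhibiting a single $\nu_0$ with $D(\nu_0) \neq 0$, i.e.\ with all $E_\alpha(\nu_0)$ pairwise distinct.

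To produce such a $\nu_0$, I would use a large-field asymptotic. Decompose
$$\tilde{M}(\nu) = \tilde{D}(\nu) + \tilde{P}, \quad \tilde{D}(\nu) = \begin{pmatrix} \diag(\nu) & 0 \\ 0 & -\diag(\nu) \end{pmatrix}, \quad \tilde{P} = \begin{pmatrix} A & B \\ -B & -A \end{pmatrix},$$
and set $\nu_j = t^j$ for a parameter $t > 0$. The eigenvalues of $\tilde{D}(\nu)$ are $\pm t^j$, $j = 1, \ldots, n$; for large $t$ their minimal gap grows like $t$, while the norm $\|\tilde{P}\|$ is $t$-independent. Weyl's inequality then produces a bijective pairing between the ordered eigenvalues of $\tilde{M}(\nu)$ and those of $\tilde{D}(\nu)$ with separations $O(1)$, and combined with the spectral symmetry $\sigma(\tilde{M}) = -\sigma(\tilde{M})$ recorded in Section~\ref{sec:JordanWigner}, the non-negative eigenvalues of $\tilde{M}(\nu)$ can be labelled so that
$$\lambda_j(\nu) = t^j + O(1) \quad \text{as } t \to \infty.$$

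For any distinct $\alpha, \beta \in \{0,1\}^n$ set $\epsilon_j := \alpha_j - \beta_j \in \{-1, 0, 1\}$ and $j^* := \max\{j : \epsilon_j \neq 0\}$. Then, using $\sum_{j < j^*} t^j = O(t^{j^* - 1})$ for $t$ large,
$$E_\alpha(\nu) - E_\beta(\nu) = 2\sum_j \epsilon_j \lambda_j(\nu) = 2 \epsilon_{j^*} t^{j^*} + O(t^{j^* - 1}),$$
which is nonzero once $t$ is sufficiently large. As there are only finitely many ordered pairs $(\alpha, \beta)$, a single large $t$ works simultaneously for all of them, yielding $\nu_0$ at which every $E_\alpha$ is distinct and hence $D(\nu_0) \neq 0$.

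The main technical step is the perturbative expansion $\lambda_j(\nu) = t^j + O(1)$, which rests on Weyl's inequality and the growing separation of the diagonal entries of $\tilde{D}(\nu)$; the remainder is the elementary observation that the zero set of a non-trivial polynomial on $\R^n$ has Lebesgue measure zero.
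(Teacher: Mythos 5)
Your proof is correct, but it reaches the conclusion by a genuinely different route in its second half. Like the paper, you first exhibit a single field configuration at which all $2^n$ eigenvalues are distinct, and the mechanism is the same: make the diagonal part of $\tilde{M}(\nu)$ so well separated that the bounded off-diagonal perturbation cannot close any gap. The paper does this by choosing a generic direction $\nu''$ for which the signed sums $\sum_j \beta_j \nu_j''$, $\beta\in\{-1,1\}^n$, are pairwise distinct and then rescaling by a large factor; your choice $\nu_j=t^j$ with $t$ large achieves the same via Weyl's inequality and dominance of the top-degree term. Where you truly diverge is in passing from one good point to almost every point: the paper fixes all but one component of $\nu'$, applies Kato's one-parameter analytic perturbation theory to $H(\nu_1,\nu_2',\ldots,\nu_n')=A_1+\nu_1 B_1$, notes that two distinct real-analytic eigenvalue branches coincide only on a countable set, and then iterates through $\nu_2,\ldots,\nu_n$ with a Fubini-type argument; you instead observe that the discriminant $D(\nu)$ of the characteristic polynomial of $H(\nu)$ is a polynomial in $\nu$ (the entries of $H(\nu)$ being affine in $\nu$), that it is not identically zero by the first step, and that the zero set of a nontrivial polynomial on $\R^n$ is Lebesgue-null. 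Your discriminant argument is shorter and avoids both analytic perturbation theory and the bookkeeping of nested nullsets $N^{(1)}, N^{(2)}(\nu_1),\ldots$; the paper's argument is more explicit about the underlying mechanism (single-variable analyticity of eigenvalues) and does not require packaging simplicity of the spectrum into a single polynomial. The only point worth making explicit in your write-up is that the $2^n$ numbers $E_\alpha$ are precisely the eigenvalues of $H(\nu)$ counted with multiplicity (as established in Section~\ref{sec:JordanWigner}), so that $D(\nu)\neq 0$ is indeed equivalent to the pairwise distinctness asserted in Proposition~\ref{prop:nondeg}.
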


\begin{remark} Pairwise distinctness of the many-body eigenvalues $E_{\alpha}$ readily implies pairwise distinctness of the one-body eigenvalues, i.e.\ Proposition~\ref{prop:nondeg} yields that $0 < \lambda_1 < \ldots < \lambda_n$ and, in particular, invertibility of $\tilde{M}(\nu)$ for Lebesgue-almost every $\nu \in \R^n$.
\end{remark}

The elementary proof of Proposition~\ref{prop:nondeg} proceeds in two steps: First we show the existence of {\it one} $\nu' \in \R^n$ such that the $2^n$ eigenvalues of $H(\nu')$ are pairwise distinct. Then we use analytic perturbation theory to show that this property extends to $H(\nu)$ for Lebesgue-a.e.\ $\nu \in \R^n$.

\vspace{.3cm}

{\bf Step 1:} Consider the functions $f_{\beta}(x) := \sum_{j=1}^n \beta_j x_j$ on $\R^n$, $\beta \in \{-1,1\}^n$. There exists a nullset $N \subset \R^n$ such that the numbers $f_{\beta}(x)$, $\beta \in \{-1,1\}^n$, are pairwise distinct for all $x\in \R^n \setminus N$ (first observe that for each pair $\beta \not= \tilde{\beta}$ there is a nullset $N_{\beta \tilde{\beta}}$ such that $f_{\beta}(x) \not= f_{\tilde{\beta}}(x)$ for all $x\in \R^n \setminus N_{\beta \tilde{\beta}}$, then take the union of these nullsets).

Fix $\nu'' \in \R^n \setminus N$ with $0 < \nu_1'' < \ldots < \nu_n''$ and let $\delta := \min_{\beta \not= \tilde{\beta}} |f_{\beta}(\nu'')- f_{\tilde{\beta}}(\nu'')|$. Let $D := \left\| \begin{pmatrix} A & B \\ -B & -A \end{pmatrix} \right\|$ and $\nu' := \frac{4nD+1}{\delta} \nu''$, so that
\begin{equation}
\min_{\beta \not= \tilde{\beta}} |f_{\beta}(\nu') - f_{\tilde{\beta}}(\nu')| = 4nD+1.
\end{equation}

Suitable choices of $\beta$, $\tilde{\beta}$ show that $|\nu_k'|>2D$ and $|\nu_k'-\nu_{\ell}'| > 2D$ for all $k \not= \ell$. Treating $M(\nu')$ as a perturbation of diag$(\nu_1', \ldots, \nu_n',-\nu_1', \ldots, -\nu_n')$, we see that each of the $2n$ intervals
\begin{equation}
[\nu_j'-D, \nu_j'+D], \: [-\nu_j'-D, -\nu_j'+D], \quad j=1,\ldots,n,
\end{equation}
contains exactly one eigenvalue of $M(\nu')$. Denote these eigenvalues by $\pm\lambda_j'$, $j=1,\ldots,n$. For the corresponding eigenvalues $E_{\beta}' := f_{\beta}(\lambda_1',\ldots,\lambda_n')$, $\beta \in \{ 1,-1\}^n$, of $H(\nu')$ we get
\begin{eqnarray}
|E_{\beta}' - E_{\tilde{\beta}}'| & = & \left| \sum_{j=1}^n (\beta_j - \tilde{\beta}_j) \lambda_j' \right| \\
& = & \left| \sum_{j=1}^n (\beta_j - \tilde{\beta}_j) (\lambda_j'-\nu_j') + f_{\beta}(\nu') - f_{\tilde{\beta}}(\nu') \right| \nonumber \\
& \ge & |f_{\beta}(\nu') - f_{\tilde{\beta}}(\nu') | - 2 \sum_{j=1}^n |\lambda_j' - \nu_j'| \nonumber \\
& \ge & |f_{\beta}(\nu') - f_{\tilde{\beta}}(\nu') | - 2nD \; > \; 0 \nonumber
\end{eqnarray}
if $\beta \not= \tilde{\beta}$. Thus $H(\nu')$ has non-degenerate spectrum.

\vspace{.3cm}

{\bf Step 2:} We now apply analytic perturbation theory iteratively to each of the parameters $\nu_1$, \ldots, $\nu_n$ in $H(\nu)$.

Fix $\nu_2'$, \ldots, $\nu_n'$, the last $n-1$ components of $\nu'$ found above. Then
\begin{equation}
H(\nu_1, \nu_2', \ldots, \nu_n') = A_1 + \nu_1 B_1
\end{equation}
for selfadjoint $A_1$ and $B_1$. Thus by Theorem II.6.1 of \cite{Kato} there are real analytic functions $f_k$, $k=1,\ldots,2^n$, such that $\{f_k(\nu_1): k = 1,\ldots, 2^n\}$ are the eigenvalues of $H(\nu_1, \nu_2', \ldots, \nu_n')$ for each $\nu_1 \in \R$. As $H(\nu')$ has non-degenerate spectrum, the numbers $\{f_k(\nu_1'): k=1,\ldots,2^n\}$ are pairwise distinct. Analyticity of the functions $f_k$ implies the existence of a nullset $N^{(1)} \subset \R$ (in fact a countable set) such that the numbers $\{f_k(\nu_1): k=1,\ldots, 2^n\}$ are pairwise distinct for each $\nu_1 \in \R \setminus N^{(1)}$ (each pair of functions $f_k$ and $f_{\tilde{k}}$, $k \not= \tilde{k}$, coincides at no more than countable many points).

For any such $\nu_1$, we can now use analyticity in $\nu_2$ (given that $H(\nu_1, \nu_2, \nu_3', \ldots, \nu_n') = A_2 + \nu_2 B_2$) to get the existence of a nullset $N^{(2)}(\nu_1)$ in $\R$ such that $H(\nu_1, \nu_2, \nu_3', \ldots, \nu_n')$ has pairwise distinct eigenvalues for all $\nu_2 \in \R \setminus N^{(2)}(\nu_1)$. It follows that the eigenvalues of $H(\nu_1, \nu_2, \nu_3', \ldots, \nu_n')$ are non-degenerate for all $\nu_1$, $\nu_2$ with $\nu_1 \in \R\setminus N^{(1)}$, $\nu_2 \in \R \setminus N^{(2)}(\nu_1)$, and thus for Lebesgue-a.e.\ $(\nu_1, \nu_2) \in \R^2$.

Iteration of this argument leads to Proposition~\ref{prop:nondeg}.

\section[Finite Fermionic Systems and Quasi-Free States]{Finite Fermionic Systems and Bogoliubov Transformations} \label{App:Bogo}

In this appendix we collect some further background from the theory of finite Fermionic systems, which was used in Section~\ref{sec:EEE} above. Most of this is well known in theoretical physics and can also be found in the mathematical physics literature, e.g.\ \cite{BratRob} or \cite{Bachetal}. We include a self-contained presentation of this material here, partly for the convenience of the reader, but also to ensure that these tools are available in the generality required here.

In a Hilbert space $\mathcal{H}$ of dimension dim$\,\mathcal{H} = 2^n$, we call
\begin{equation}
\mathcal{D} = (d_1,d_1^*,d_2,d_2^*, \ldots, d_n, d_n^*)^t
\end{equation}
a {\it Fermionic system} if the operators $d_j \in \mathcal{B}(\mathcal{H})$ and their adjoints satisfy the canonical commutation relations (CAR)
\begin{equation} \label{CARpropertiesD}
\{d_j, d_k^*\} = \delta_{jk}\I, \quad \{d_j, d_k\} = \{d_j^*, d_k^*\} = 0 \quad \mbox{for all $j, k =
1,\ldots,n$}.
\end{equation}
The intersection of the kernels of the $d_j$ is one-dimensional, i.e.\ they contain an essentially unique normalized vector $\Omega$, referred to as the vacuum vector,  from which an orthonormal basis of $\mathcal{H}$ is found as
\begin{equation}\label{eq:PhiAlpha}
\phi_{\alpha} = (d_1^*)^{\alpha_1} \ldots (d_n^*)^{\alpha_n} \Omega, \quad \alpha \in \{0,1\}^n,
\end{equation}
see for example \cite{Simon}.

It is easy to see that the operators $\{d_1,d_1^*, \ldots, d_n, d_n^*\}$ in a Fermionic system are linearly independent (consider a general linear combination of these operators and calculate its anti-commutators with all $d_j$ and $d_j^*$). Next we state two other basic properties of finite Fermionic systems.

\begin{lem}\label{lemma:Ferm:UnitEquiv}
Let $\mathcal{H}$ and $\tilde{\mathcal{H}}$ be $2^n$-dimensional Hilbert spaces and $\mathcal{D} = (d_1, d_1^*, \ldots,d_n,d_n^*)^t$ be a Fermionic system in $\mathcal{H}$. Then
\begin{equation}
\tilde{\mathcal{D}} = (\tilde{d}_1,\tilde{d}_1^*,\ldots,\tilde{d}_n, \tilde{d}_n^*)^t
\end{equation}
is a Fermionic system in $\tilde{\mathcal{H}}$ if and only if there exists a unitary operator $U: \mathcal{H} \to \tilde{\mathcal{H}}$ such
that $U^* \tilde{d}_j U=d_j$ for all $1\leq j\leq n$. The unitary operator $U$ is characterized by $U \phi_{\alpha} = \tilde{\phi}_{\alpha}$ for all $\alpha \in \{0,1\}^n$, where $\{\phi_{\alpha}\}$ and $\{\tilde{\phi}_{\alpha}\}$ are the ONBs associated with $\mathcal{D}$ and $\tilde{\mathcal{D}}$ through (\ref{eq:PhiAlpha}).
\end{lem}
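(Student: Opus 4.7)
The ``if'' direction is immediate: unitary conjugation preserves operator identities, so if $U^*\tilde{d}_j U=d_j$ for all $j$, then the $\tilde{d}_j$ inherit the CAR \eqref{CARpropertiesD} from the $d_j$, and hence $\tilde{\mathcal{D}}$ is a Fermionic system in $\tilde{\mathcal{H}}$. The substance is in the ``only if'' direction, together with the characterization $U\phi_{\alpha}=\tilde{\phi}_{\alpha}$.

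For the converse, the plan is to use the canonical bases \eqref{eq:PhiAlpha} associated with $\mathcal{D}$ and with $\tilde{\mathcal{D}}$. Fix choices of vacua $\Omega$ and $\tilde{\Omega}$ (each unique up to a phase, as stated in the paragraph preceding the lemma) and let $\{\phi_{\alpha}\}_{\alpha\in\{0,1\}^n}$ and $\{\tilde{\phi}_{\alpha}\}_{\alpha\in\{0,1\}^n}$ be the orthonormal bases they generate. Define $U:\mathcal{H}\to\tilde{\mathcal{H}}$ to be the unique linear extension of $\phi_{\alpha}\mapsto\tilde{\phi}_{\alpha}$; this $U$ is unitary by construction since it maps an ONB to an ONB, and it automatically satisfies the asserted characterization. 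It then remains to verify the intertwining identity $U d_j = \tilde{d}_j U$ for each $j$, which is equivalent to $U^*\tilde{d}_j U = d_j$.

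The core computation is to derive, from nothing but the CAR \eqref{CARpropertiesD} and the vacuum property $d_j\Omega=0$, explicit formulas of the form
\begin{equation*}
d_j\phi_{\alpha}=\begin{cases}(-1)^{\sigma_j(\alpha)}\phi_{\alpha-e_j},&\alpha_j=1,\\ 0,&\alpha_j=0,\end{cases}\qquad d_j^*\phi_{\alpha}=\begin{cases}0,&\alpha_j=1,\\ (-1)^{\sigma_j(\alpha)}\phi_{\alpha+e_j},&\alpha_j=0,\end{cases}
\end{equation*}
where $\sigma_j(\alpha):=\sum_{k<j}\alpha_k$. These come from anticommuting $d_j$ (resp.\ $d_j^*$) through the string $(d_1^*)^{\alpha_1}\cdots(d_n^*)^{\alpha_n}$, using $(d_j^*)^2=0$ when $\alpha_j=1$ in the second formula and, in the first formula when $\alpha_j=1$, using $d_jd_j^*=\I-d_j^*d_j$ and killing the residual term by pushing $d_j$ on to $\Omega$. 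Crucially, the signs $(-1)^{\sigma_j(\alpha)}$ and the index shifts $\alpha\pm e_j$ depend only on the CAR and the vacuum condition, hence the analogous identities hold verbatim for $\tilde{d}_j$ and $\tilde{d}_j^*$ acting on $\tilde{\phi}_{\alpha}$. Applying $U$ to the formulas for $d_j\phi_{\alpha}$ and comparing with $\tilde{d}_j \tilde{\phi}_{\alpha}$ gives $Ud_j\phi_{\alpha}=\tilde{d}_j U\phi_{\alpha}$ for every $\alpha$, so $Ud_j=\tilde{d}_j U$ on $\mathcal{H}$, as required.

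The main (mild) obstacle will be bookkeeping of the anticommutation signs when computing $d_j\phi_{\alpha}$; once one establishes the universal formulas above, the rest of the argument is essentially a diagram chase. No separate argument is needed for uniqueness of $U$ beyond what is stated: different choices of vacuum phases produce different unitaries, all differing by a global phase, which is why the lemma only asserts the \emph{existence} of an intertwining $U$ and pins it down uniquely once the bases $\{\phi_{\alpha}\}$ and $\{\tilde{\phi}_{\alpha}\}$ have been fixed.
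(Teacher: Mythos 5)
Your proposal is correct and follows essentially the same route as the paper: define $U$ by $\phi_{\alpha}\mapsto\tilde{\phi}_{\alpha}$, note it is unitary since it maps an ONB to an ONB, and verify the intertwining relation via the universal formulas for the action of $d_j$, $d_j^*$ on the basis $\phi_{\alpha}$, which depend only on the CAR and the vacuum condition (your sign $(-1)^{\sum_{k<j}\alpha_k}$ agrees with the paper's $(-1)^{\sum_{k=1}^{j}\alpha_k}$ in the relevant cases). The paper likewise dispatches the ``if'' direction as straightforward.
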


\begin{proof}
For $U$ as above and any $j$ we have
\begin{eqnarray}
{\tilde{d}}_j^*U\phi_\alpha & = & {\tilde{d}}_j^*\tilde{\phi}_\alpha=\left\{
                              \begin{array}{ll}
                                0, & \hbox{if $\alpha_j=1$;} \\
                                (-1)^{\sum_{k=1}^j\alpha_k}\tilde{\phi}_{\alpha+e_j} = (-1)^{\sum_{k=1}^j\alpha_k} U \phi_{\alpha+e_j}, & \hbox{if $      \alpha_j=0$.} \end{array} \right. \\
& = & U d_j^* \phi_{\alpha} . \nonumber
\end{eqnarray}
Thus $d_j^*=U^* {\tilde{d}}_j^* U$ and $U^* {\tilde{d}}_j U=d_j$ for all $1\leq j\leq n$.

The converse is straightforward. \end{proof}

\begin{lem}\label{lemma:FermionsGeneratesH}
Let $\mathcal{D}$ be a Fermionic system in $\mathcal{H}$ and $\mathcal{A}$ be
the $\star$-algebra generated by the components of $\mathcal{D}$. Then $\mathcal{A}=\mathcal{B}(\mathcal{H})$.
\end{lem}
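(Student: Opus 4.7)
My plan is to show that every rank-one operator of the form $|\phi_\alpha\rangle\langle\phi_\beta|$ (with $\alpha,\beta \in \{0,1\}^n$) belongs to $\mathcal{A}$. Since such operators span $\mathcal{B}(\mathcal{H})$, this gives $\mathcal{A} \supseteq \mathcal{B}(\mathcal{H})$, and the reverse inclusion is trivial.

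The first step will be to identify the vacuum projector $|\Omega\rangle\langle\Omega|$ as an element of $\mathcal{A}$. Using the CAR (\ref{CARpropertiesD}), I would check on the basis (\ref{eq:PhiAlpha}) that each operator $d_j d_j^{*} = \I - d_j^{*} d_j$ acts as the identity on $\phi_\alpha$ when $\alpha_j = 0$ and as zero when $\alpha_j = 1$. Hence $d_j d_j^{*}$ is the orthogonal projection onto $\mathrm{span}\{\phi_\alpha : \alpha_j = 0\}$. These projections mutually commute (they are functions of the commuting number operators $d_j^{*} d_j$), so their product projects onto the one-dimensional subspace spanned by $\Omega$, giving
\[
|\Omega\rangle\langle\Omega| \;=\; \prod_{j=1}^n d_j d_j^{*} \;\in\; \mathcal{A}.
\]

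The second step will be to verify the identity
\[
|\phi_\alpha\rangle\langle\phi_\beta| \;=\; (d_1^{*})^{\alpha_1}\cdots(d_n^{*})^{\alpha_n} \, |\Omega\rangle\langle\Omega| \, d_n^{\beta_n}\cdots d_1^{\beta_1},
\]
which one checks by applying both sides to an arbitrary $v \in \mathcal{H}$ and using $\langle \Omega,\, d_n^{\beta_n}\cdots d_1^{\beta_1} v\rangle = \langle \phi_\beta, v\rangle$ from the definition of $\phi_\beta$. Combined with the first step, this exhibits every $|\phi_\alpha\rangle\langle\phi_\beta|$ as an element of $\mathcal{A}$, and taking complex linear combinations finishes the argument.

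The only mildly nontrivial ingredient is the representation of $|\Omega\rangle\langle\Omega|$ as the product of the occupation-zero projections $d_j d_j^{*}$; the remaining pieces are routine bookkeeping with the CAR and the definition (\ref{eq:PhiAlpha}). I do not expect any serious obstacle.
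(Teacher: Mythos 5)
Your proposal is correct and takes essentially the same approach as the paper: both exhibit the matrix units $|\phi_\alpha\rangle\langle\phi_\beta|$ as explicit monomials in the $d_j$ and $d_j^*$ (the paper's operator $A_{\alpha,\beta}$ is, up to sign, exactly such a matrix unit). Your factorization through the vacuum projector $\prod_{j=1}^n d_j d_j^* = |\Omega\rangle\langle\Omega|$ is a slightly cleaner packaging that makes the verification immediate, whereas the paper writes a three-fold product and omits the ``somewhat tedious'' check.
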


\begin{proof}
With $\phi_{\alpha}$ as in (\ref{eq:PhiAlpha}) it suffices to show that for every pair $\alpha, \beta \in \{0,1\}^n$ there exists an operator $A_{\alpha,\beta}\in\mathcal{A}$, such that $A_{\alpha,\beta}\phi_{\alpha}=\phi_{\beta}$ and $A_{\alpha,\beta}\phi_{\tilde{\alpha}}=0$ for $\tilde{\alpha}\neq \alpha$. This operator is explicitly given by
\begin{equation}
A_{\alpha,\beta}:=\left(\prod_{j=1}^{n}(d_jd_j^*)^{1-\beta_j}\right)\left(\prod_{j=0}^{n-1} \left(d_{n-j}^*\right)^{\beta_{n-j}}\right)\left(\prod_{j=1}^n d_{j}^{\alpha_{j}}\right).
\end{equation}
We omit the somewhat tedious calculations needed to verify this.
\end{proof}

A matrix $W\in \C^{2n\times 2n}$ is called a {\it Bogoliubov matrix} if $W$ is unitary and
\begin{equation} \label{eq:Bogo:WJ}
WJ W^t=J,\ \text{where }\
J=\left(\sigma^X\right)^{\oplus n}
\end{equation}

The reason for using this terminology is that, for the finite Fermionic systems considered here, Bogoliubov matrices implement Bogoliubov transformations:

\begin{lem} \label{thm:Bogo:transf}
Let $\mathcal{D}$ be a Fermionic system in $\mathcal{H}$ and $W\in \C^{2n \times 2n}$. Then
\begin{equation}\label{eq:Bogo:transf}
\tilde{\mathcal{D}}:=W\mathcal{D}
\end{equation}
is a Fermionic system in $\mathcal{H}$ if and only if $W$ is a Bogoliubov matrix. In this case, the correlation matrices, as defined in (\ref{cormatdef}) above, are related by
\begin{equation} \label{cormattransfer}
\Gamma_{\rho}^{\tilde{\mathcal{D}}} = W \Gamma_{\rho}^{\mathcal{D}} W^*
\end{equation}
for all states $\rho$ in $\mathcal{H}$.
\end{lem}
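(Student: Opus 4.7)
The plan is to repackage both the CAR \eqref{CARpropertiesD} and the adjoint-pairing structure of a Fermionic system as two matrix identities, so that the condition that $\tilde{\mathcal{D}}=W\mathcal{D}$ be Fermionic reduces to purely algebraic constraints on $W$. The two directions of the equivalence then become essentially the same calculation run in reverse, and the correlation-matrix identity falls out of a one-line computation.

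The first step is to observe two compact rewritings using $J=(\sigma^X)^{\oplus n}$. Writing $\mathcal{D}^{\#}$ for the componentwise adjoint of the column $\mathcal{D}$, the pairing structure $\mathcal{D}_{2k}=(\mathcal{D}_{2k-1})^*$ is equivalent to $\mathcal{D}^{\#}=J\mathcal{D}$, since $J$ swaps the two entries in each pair. A block-by-block check of the four types of anticommutator $\{d_j,d_k\}$, $\{d_j,d_k^*\}$, $\{d_j^*,d_k\}$, $\{d_j^*,d_k^*\}$ shows that the CAR \eqref{CARpropertiesD} are equivalent to the single $2n\times 2n$ matrix identity $[\{\mathcal{D}_i,\mathcal{D}_j\}]_{i,j=1}^{2n}=J$.

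Next I would translate these into conditions on $W$. For $\tilde{\mathcal{D}}=W\mathcal{D}$ we have $\tilde{\mathcal{D}}^{\#}=\overline{W}\mathcal{D}^{\#}=\overline{W}J\mathcal{D}$, so the pairing condition $\tilde{\mathcal{D}}^{\#}=J\tilde{\mathcal{D}}$ reads $\overline{W}J\mathcal{D}=JW\mathcal{D}$; by the linear independence of $\{d_1,d_1^*,\ldots,d_n,d_n^*\}$ in $\mathcal{B}(\mathcal{H})$ noted immediately before the lemma, this is equivalent to $\overline{W}J=JW$. The CAR identity for $\tilde{\mathcal{D}}$ reads $W[\{\mathcal{D}_i,\mathcal{D}_j\}]W^t=J$, i.e.\ \eqref{eq:Bogo:WJ}. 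Combining the two: $WJW^t=J$ together with $J^2=I$ gives $W^{-1}=JW^tJ$, while transposing $\overline{W}J=JW$ gives $\overline{W}^t=JW^tJ$; hence $W^*=\overline{W}^t=W^{-1}$, so $W$ is unitary, which together with \eqref{eq:Bogo:WJ} is exactly the Bogoliubov property. The converse direction just runs these equivalences in the opposite order: unitarity and \eqref{eq:Bogo:WJ} yield $\overline{W}^t=JW^tJ$, hence $\overline{W}J=JW$, recovering the pairing condition, while $WJW^t=J$ directly gives the CAR for $\tilde{\mathcal{D}}$.

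Finally, for the correlation-matrix identity, one has $(W\mathcal{D})^*=\mathcal{D}^*W^*$ as a row vector, so entrywise $\tilde{\mathcal{D}}\tilde{\mathcal{D}}^*=W(\mathcal{D}\mathcal{D}^*)W^*$; applying $\langle\,\cdot\,\rangle_\rho$ to each entry yields \eqref{cormattransfer}. The only non-formal input in the whole argument is the linear-independence of $\{d_j,d_j^*\}$ used to deduce $\overline{W}J=JW$; everything else is block-matrix bookkeeping, so I do not expect a significant obstacle.
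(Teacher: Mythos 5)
Your proof is correct, and for the ``only if'' direction it takes a cleaner route than the paper. The paper's forward implication is close to yours: it encodes the CAR as $\mathcal{D}\mathcal{D}^* + J(\mathcal{D}\mathcal{D}^*)^tJ = \I_{2n}$ and conjugates by $W$ using the relations $J\overline{W}=WJ$ and $W^tJ=JW^*$. For the converse, however, the paper changes basis to the ordering $(d_1,\ldots,d_n,d_1^*,\ldots,d_n^*)$, writes the transformation in $2\times 2$ block form $\hat{W}=\left(\begin{smallmatrix} K & L \\ M & N \end{smallmatrix}\right)$, extracts $M=\overline{L}$, $N=\overline{K}$ from linear independence, and then proves unitarity by computing the anticommutators $\{\tilde d_j,\tilde d_k\}$, $\{\tilde d_j,\tilde d_k^*\}$ row by row. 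You instead package the adjoint-pairing as $\mathcal{D}^{\#}=J\mathcal{D}$ and the CAR as $[\{\mathcal{D}_i,\mathcal{D}_j\}]_{ij}=J$, so that both directions reduce to the pair of matrix identities $\overline{W}J=JW$ and $WJW^t=J$, from which unitarity follows by two lines of algebra ($W^{-1}=JW^tJ=\overline{W}^t$). This buys symmetry between the two implications and avoids the block-by-block bookkeeping; it also makes explicit a point the paper's forward direction leaves implicit, namely that the components of $W\mathcal{D}$ actually pair up as adjoints (your condition $\overline{W}J=JW$). The only external input in both arguments is the same: linear independence of $\{d_j,d_j^*\}$, which the paper establishes just before the lemma. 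Your treatment of (\ref{cormattransfer}) via $\tilde{\mathcal{D}}\tilde{\mathcal{D}}^*=W(\mathcal{D}\mathcal{D}^*)W^*$ is exactly the ``simple linearity argument'' the paper alludes to.
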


\begin{proof}
Note first that (\ref{cormattransfer}) follows from a simple linearity argument.
Let $W$ be a Bogoliubov matrix and $\mathcal{D}$ a Fermionic system. The latter means
\begin{equation}
\mathcal{D}\mathcal{D}^*+J\left(\mathcal{D}\mathcal{D}^*\right)^tJ=\I_{2n}.
\end{equation}
Note that, given unitarity, the condition (\ref{eq:Bogo:WJ}) is equivalent to $J\overline{W}=WJ$ and $W^t J=J W^*$. Thus
\begin{equation}
\tilde{\mathcal{D}}\tilde{\mathcal{D}}^* +J (\tilde{\mathcal{D}}\tilde{\mathcal{D}}^*)^t J
 =  W\left(\mathcal{D}\mathcal{D}^* +J(\mathcal{D}\mathcal{D}^*)^t J\right)W^*
 =  W\I_{2n} W^*
 =  \I_{2n},
\end{equation}
so $\tilde{\mathcal{D}}$ is a Fermionic system.

By performing a simple change of basis, the converse can be restated as: If $\mathcal{D}$ and $\tilde{\mathcal{D}}$ are two Fermionic systems related by
\begin{equation}\label{eq:Bogo:transf2}
 \left(\tilde{d}_1 , \tilde{d}_2,\ldots,\tilde{d}_n, (\tilde{d}_1)^* , \ldots , (\tilde{d}_n)^*\right)^t
=\hat{W} \left(d_1 , d_2,\ldots,d_n, (d_1)^* , \ldots , (d_n)^*\right)^t
\end{equation}
then $\hat{W}$ is unitary and satisfies 
\begin{equation}\label{eq:Bogo:2}
\hat{W}J_1\hat{W}^t=J_1, \ \text{where}\ J_1:=\begin{pmatrix}
                                                            0 & \I \\
                                                            \I & 0 \\
                                                          \end{pmatrix}.
\end{equation}
To show this, write $\hat{W}$ in block form,
\begin{equation}
\hat{W}=\begin{pmatrix}
    K & L \\
    M & N \\
  \end{pmatrix}.
\end{equation}
From the transformation (\ref{eq:Bogo:transf2}) we get that for $j=1,2,\ldots,n$,
\begin{equation}\label{eq:Bogo:bj}
\tilde{d}_j   = \sum_{k=1}^n (K_{jk} d_k+L_{jk} d_k^*), \quad \tilde{d}_j^* = \sum_{k=1}^n (M_{jk} d_k+N_{jk} d_k^*).
\end{equation}
By taking the adjoint of the left hand side of (\ref{eq:Bogo:bj}) and comparing it with the right hand side, using that the elements of a Fermionic system are linearly independent, we get
\begin{equation}
N_{jk}= \overline{K_{jk}}, \quad M_{jk}=\overline{L_{jk}}, \text{ for all } j, k=1,2,\ldots,n,
\end{equation}
meaning that
\begin{equation}\label{eq:Bogo:WBlocks}
M=\overline{L} \text{  and  } N=\overline{K}.
\end{equation}
Next, we will prove that $\hat{W}$ is a unitary matrix, that is the rows are an ONB of $\C^{2n}$.
Using (\ref{eq:Bogo:bj}) and (\ref{eq:Bogo:WBlocks}), a calculation starting with $\{\tilde{d}_j,\tilde{d}_j^*\}=\I$ proves that the rows of $\hat{W}$ are unit vectors. Next, for $j\neq k$, two calculations starting from $\{\tilde{d}_j,\tilde{d}_k^*\}=0$ and $\{\tilde{d}_j,\tilde{d}_k\}=0$ show that any two rows of $\hat{W}$ are orthogonal.
Thus $\hat{W}$ is a unitary matrix, which means that $WW^*=\I$. In terms of the blocks of $\hat{W}$ this means that
\begin{equation}
\begin{array}{l}
      KK^*+LL^*=\I, \quad  KL^t+L K^t=0, \\
      \overline{L}K^*+\overline{K}L^*=0, \quad   \overline{L}L^t+\overline{K}K^t=\I.
\end{array}
\end{equation}
One checks that this is equivalent to (\ref{eq:Bogo:2}).
\end{proof}

In the situation above we say that $\tilde{\mathcal{D}}$ is a Bogoliubov transformation of $\mathcal{D}$. One can easily see that being related by a Bogoliubov transformation is an equivalence relation between Fermionic systems.

As in Section~\ref{sec:correlation} we say that a state $\rho$ on $\mathcal{H}$ is quasi free with respect to a Fermionic system $\mathcal{D}$ if expectations of products of operators $D(f_j,g_j)$ satisfy (\ref{quasifree}).

\begin{lem}\label{lemma:BogoWicks1}
Let $\mathcal{D}$ and $\tilde{\mathcal{D}}$ be Fermionic systems on $\mathcal{H}$ and assume that $\rho$ is quasi free with respect to $\mathcal{D}$.

(a) Let $U$ be the unitary relating $\mathcal{D}$ and $\tilde{\mathcal{D}}$ as in Lemma~\ref{lemma:Ferm:UnitEquiv}. Then $U\rho U^*$ is quasi free with respect to $\tilde{\mathcal{D}}$.

(b) If $\tilde{\mathcal{D}}$ is a Bogoliubov transformation of $\mathcal{D}$, then $\rho$ is quasi free with respect to $\tilde{\mathcal{D}}$.
\end{lem}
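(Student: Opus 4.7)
For part (a), I would exploit the fact that the unitary intertwining relation $U^* \tilde{d}_j U = d_j$ from Lemma~\ref{lemma:Ferm:UnitEquiv} extends by linearity to $\tilde{D}(f,g) = U D(f,g) U^*$ for every pair $f,g:\Lambda\to\C$. Combined with cyclicity of the trace, this reduces any $U\rho U^*$-expectation of a product $\prod_{j=1}^m \tilde{D}(f_j,g_j)$ to the corresponding $\rho$-expectation of $\prod_{j=1}^m D(f_j,g_j)$. The plan is then to apply Wick's rule (\ref{quasifree}) for $\rho$ with respect to $\mathcal{D}$ on the right-hand side, and convert each two-point factor $\langle D(f_k,g_k) D(f_1,g_1) \rangle_\rho$ back into $\langle \tilde{D}(f_k,g_k) \tilde{D}(f_1,g_1) \rangle_{U\rho U^*}$ by the same identity, thus obtaining precisely (\ref{quasifree}) for $U\rho U^*$ with respect to $\tilde{\mathcal{D}}$. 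The odd-$m$ case is handled by the same reduction, since the right-hand side vanishes by hypothesis.

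For part (b), the strategy is even more direct. Since $\tilde{\mathcal{D}} = W\mathcal{D}$ for a Bogoliubov matrix $W$, each $\tilde{d}_j$ and each $\tilde{d}_j^*$ is a complex linear combination of the $2n$ operators $d_1, d_1^*, \ldots, d_n, d_n^*$. I would therefore show that for any $f,g:\Lambda\to\C$ there exist $F,G:\Lambda\to\C$ (obtained by regrouping the coefficients supplied by the entries of $W$ according to the block decomposition used in the proof of Lemma~\ref{thm:Bogo:transf}) such that $\tilde{D}(f,g) = D(F,G)$. Because the class of operators of the form $D(\cdot,\cdot)$ thereby coincides with the class $\tilde{D}(\cdot,\cdot)$, the defining identity (\ref{quasifree}) for $\rho$ with respect to $\mathcal{D}$ automatically yields the same identity with $\tilde{D}_j$'s in place of $D_j$'s, with both the two-point factors and the $(m{-}2)$-fold products transforming consistently.

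Neither part presents a serious obstacle. For part (a), the only subtlety is verifying that $\tilde{d}_j^* = U d_j^* U^*$ follows from $\tilde{d}_j = U d_j U^*$ (immediate from unitarity of $U$), so that the conjugation really sends $D(f,g)$ to $\tilde{D}(f,g)$ rather than some mixed expression. For part (b), the only bookkeeping to watch is that $\tilde{d}(f)$ and $\tilde{d}^*(g)$ each individually contribute to both the $d(\cdot)$ and the $d^*(\cdot)$ parts of the final $D(F,G)$; the Bogoliubov relation $WJW^t = J$ is not explicitly needed here beyond its role in guaranteeing that $\tilde{\mathcal{D}}$ is itself a Fermionic system, so that the statement ``$\rho$ is quasi free with respect to $\tilde{\mathcal{D}}$'' is even well-posed.
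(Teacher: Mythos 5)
Your proposal is correct and follows essentially the same route as the paper: part (a) via the conjugation identity $\langle \prod_j \tilde{D}_j\rangle_{U\rho U^*} = \langle \prod_j D_j\rangle_{\rho}$ (applied both to the full product and to the two-point factors), and part (b) via the observation that each $\tilde{D}(f,g)$ equals some $D(h,r)$, so Wick's rule transfers verbatim. Your closing remark that $WJW^t=J$ enters only through the well-posedness of the statement is a correct and worthwhile clarification, but the argument is the paper's.
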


\begin{proof}
(a) For any positive integer $m$,
\begin{equation}\label{equ:Wick:local1}
\left\langle\prod_{j=1}^m \tilde{D}_j\right\rangle_{U\rho U^*}=
\left\langle U^*\left(\prod_{j=1}^m \tilde{D}_j\right) U\right\rangle_{\rho}=
\left\langle\prod_{j=1}^m D_j\right\rangle_{\rho}.
\end{equation}
From this it is straightforward to check that $U\rho U^*$ is quasi free with respect to $\tilde{\mathcal{D}}$.

(b) Since the $\tilde{d}_j$ and $\tilde{d}_j^*$ are linear combinations of the $d_j$ and $d_j^*$, one can see that, for given $f,g: \{1,\ldots,n\} \to \C$,
\begin{equation}
\tilde{D}(f,g)=D(h,r)
\end{equation}
for some $h,r :\{1,2,\ldots,n\}\rightarrow \C$. With this the claim follows easily.
\end{proof}

The following result states that, up to unitary equivalence, quasi free states are determined by their correlation matrices.

\begin{lem}\label{thm:App:corrBogo:1}
Let $\rho$ and $\tilde{\rho}$ be states in $2^n$-dimensional Hilbert spaces $\mathcal{H}$ and $\tilde{\mathcal{H}}$, respectively, which are quasi free with respect to the Fermionic systems $\mathcal{D}$ and $\tilde{\mathcal{D}}$, respectively. If
\begin{equation}
\Gamma_{\rho}^{\mathcal{D}}=\Gamma_{\tilde{\rho}}^{\tilde{\mathcal{D}}},
\end{equation}
then $\rho$ and $\tilde{\rho}$ are unitary equivalent.
\end{lem}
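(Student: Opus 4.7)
The plan is to combine the unitary implementation of Fermionic systems from Lemma~\ref{lemma:Ferm:UnitEquiv} with the uniqueness part of Lemma~\ref{lem:quasifree}, using Lemma~\ref{lemma:BogoWicks1}(a) to transport the quasi-free property along the intertwining unitary.

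First I would invoke Lemma~\ref{lemma:Ferm:UnitEquiv} to produce a unitary $U:\mathcal{H}\to\tilde{\mathcal{H}}$ intertwining the two Fermionic systems, in the sense that $U^*\tilde{d}_j U = d_j$ for all $j=1,\ldots,n$. Set $\rho':=U^*\tilde{\rho}U$, a state on $\mathcal{H}$. By (the mirror image of) Lemma~\ref{lemma:BogoWicks1}(a)—which is symmetric in $\mathcal{D}$ and $\tilde{\mathcal{D}}$, since $U^*$ intertwines the systems in the reverse direction—the state $\rho'$ is quasi free with respect to $\mathcal{D}$.

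Next I would compute the correlation matrix of $\rho'$ with respect to $\mathcal{D}$. Using the intertwining relation entrywise (so that $d_j d_k^* = U^*\tilde{d}_j \tilde{d}_k^* U$, etc.) together with cyclicity of the trace,
\begin{equation}
\Gamma_{\rho'}^{\mathcal{D}} = \langle \mathcal{D}\mathcal{D}^* \rangle_{\rho'} = \Tr(U^*\tilde{\mathcal{D}}\tilde{\mathcal{D}}^* U\, U^*\tilde{\rho}U) = \Tr(\tilde{\mathcal{D}}\tilde{\mathcal{D}}^*\tilde{\rho}) = \Gamma_{\tilde{\rho}}^{\tilde{\mathcal{D}}} = \Gamma_{\rho}^{\mathcal{D}},
\end{equation}
the last equality being the hypothesis. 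Thus $\rho$ and $\rho'$ are two states on $\mathcal{H}$ that are both quasi free with respect to the same Fermionic system $\mathcal{D}$ and share the same correlation matrix.

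Finally I would apply the uniqueness statement in Lemma~\ref{lem:quasifree}: a state that is quasi free with respect to $\mathcal{D}$ is uniquely determined by its correlation matrix $\Gamma^{\mathcal{D}}$. Therefore $\rho = \rho' = U^*\tilde{\rho}U$, i.e.\ $\tilde{\rho} = U\rho U^*$, which is exactly the asserted unitary equivalence. There is no real obstacle here; the only point requiring minor care is keeping track of the direction of the intertwining in the second step, and verifying that Lemma~\ref{lemma:BogoWicks1}(a) applies symmetrically (which it does, since the property of being quasi free is phrased entirely in terms of expectations of operator products, and conjugation by $U^*$ transports these bijectively between $\mathcal{D}$-monomials and $\tilde{\mathcal{D}}$-monomials).
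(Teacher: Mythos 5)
Your proposal is correct and follows essentially the same route as the paper: both arguments hinge on the intertwining unitary from Lemma~\ref{lemma:Ferm:UnitEquiv} and on the fact that a quasi-free state is determined by its correlation matrix. The only difference is one of packaging — you invoke the uniqueness claim of Lemma~\ref{lem:quasifree} as a black box, whereas the paper's proof re-runs the underlying Pfaffian computation $\langle\prod_j D_j\rangle_{\rho}=\pf(D^{(\rho,m)})=\pf(\tilde{D}^{(\tilde{\rho},m)})=\langle\prod_j D_j\rangle_{U\tilde{\rho}U^*}$ and then appeals to Lemma~\ref{lemma:FermionsGeneratesH}, which is exactly the content of that uniqueness claim.
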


\begin{proof}
Using notations as in Section~\ref{sec:correlation} (for both $\mathcal{D}$ and $\tilde{\mathcal{D}}$), equality of the correlation matrices gives the equality of the correlations
\begin{equation}
\langle D_j D_k\rangle_{\rho}=\langle \tilde{D}_j \tilde{D}_k\rangle_{\tilde{\rho}} \ \ \text{for all}\; \;1\leq j<k\leq n.
\end{equation}
Thus for any positive integer $m$,
\begin{equation}
D^{(\rho,m)}=\tilde{D}^{(\tilde{\rho},m)}.
\end{equation}
As $\rho$ and $\tilde{\rho}$ are quasi free with respect to $\mathcal{D}$ and $\tilde{\mathcal{D}}$, this implies
\begin{eqnarray}
\left\langle \prod_{j=1}^m D_j\right\rangle_{\rho}&=&\pf\left(D^{(\rho,m)}\right)=\pf\left(\tilde{D}^{(\tilde{\rho},m)}\right)
= \left\langle \prod_{j=1}^m \tilde{D}_j\right\rangle_{\tilde{\rho}}=\left\langle\prod_{j=1}^m U^* D_j U\right\rangle_{\tilde{\rho}} \\
&=&\left\langle\prod_{j=1}^m D_j\right\rangle_{U \tilde{\rho} U^*}, \nonumber
\end{eqnarray}
where we used Lemma~\ref{lemma:Ferm:UnitEquiv} which provides a unitary $U$ such that $\tilde{d}_j=U^* d_j U$ for all $1\leq j\leq n$. By Lemma~\ref{lemma:FermionsGeneratesH} this implies $\langle A \rangle_{\rho} = \langle A \rangle_{U \tilde{\rho} U^*}$ and thus
\begin{equation}
\rho=U \tilde{\rho} U^*
\end{equation}
\end{proof}

Our final goal in this Appendix will be to prove the trace identity (\ref{redtocormat}). We will first prove this identity for diagonal product states with respect to the Jordan-Wigner Fermionic operators $\mathcal{C}$ given in (\ref{JWsystem}). The general result (\ref{redtocormat}) will then be reduced to this special case.

\begin{proposition} \label{thm:Wicks:generalmainthm}
Let $\rho^{(diag)}\in\mathcal{B}(\bigotimes_{j=1}^n \C^2)$ be the product state given by
\begin{equation}\label{eq:App:rhodiagonal}
\rho^{(diag)}=\bigotimes_{j=1}^n \begin{pmatrix}
                          \eta_j & 0 \\
                          0 & 1-\eta_j \\
                        \end{pmatrix},
\end{equation}
where $\eta_j\in [0,1]$ for $1\leq j\leq n$. Then $\rho^{(diag)}$ is quasi free with respect to the Jordan-Wigner Fermionic system $\mathcal{C}$. Also,
\begin{equation} \label{prodcormatrix}
\Gamma_{\rho^{(diag)}}^{\mathcal{C}} = \bigoplus_{j=1}^n\begin{pmatrix}
                                                          1-\eta_j & 0 \\
                                                          0 & \eta_j \\
                                                        \end{pmatrix}
\end{equation}
and
\begin{equation} \label{prodentropy}
\Tr \rho^{(diag)} \log \rho^{(diag)} = \tr \Gamma_{\rho^{(diag)}}^{\mathcal{C}} \log \Gamma_{\rho^{(diag)}}^{\mathcal{C}}.
\end{equation}
\end{proposition}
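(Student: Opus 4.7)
My plan is to address the three assertions of the proposition in turn, with quasi-freeness being the main content.

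\smallskip

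\noindent\textbf{Correlation matrix and entropy.} For (\ref{prodcormatrix}) I would exploit $(\sigma^Z)^2 = \I$: the diagonal products $c_j c_j^* = a_j a_j^*$ and $c_j^* c_j = a_j^* a_j$ act only on site $j$, while $c_j^2 = (c_j^*)^2 = 0$. The $(j,j)$-block of $\Gamma_{\rho^{(diag)}}^{\mathcal{C}}$ therefore becomes $\diag(\Tr(aa^*\rho_j), \Tr(a^*a\rho_j)) = \diag(1-\eta_j, \eta_j)$. For $j < k$, moving $a_j$ through the Pauli tail of $c_k$ (anticommuting on site $j$, commuting elsewhere) writes each of $c_j c_k^*, c_j c_k, c_j^* c_k, c_j^* c_k^*$ as a signed tensor product involving $a_j^{(\ast)}$ on site $j$, $a_k^{(\ast)}$ on site $k$, and $\sigma^Z$ on intermediate sites; since $\Tr(a\rho_j) = \Tr(a^*\rho_j) = 0$, all off-diagonal blocks vanish. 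The entropy identity (\ref{prodentropy}) is then immediate: both sides equal $\sum_j(\eta_j\log\eta_j + (1-\eta_j)\log(1-\eta_j))$, the left-hand side by the factorization of von Neumann entropy over tensor products, the right-hand side by reading off the eigenvalues from (\ref{prodcormatrix}).

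\smallskip

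\noindent\textbf{Thermal representation.} For quasi-freeness I would first assume $\eta_j \in (0,1)$ and set $\beta_j := \log((1-\eta_j)/\eta_j)$. Since $c_j^* c_j = a_j^* a_j$ is a local number operator and these commute across sites, a tensor-factor calculation gives
\begin{equation*}
\rho^{(diag)} = Z^{-1} e^{-K}, \qquad K := \sum_{j=1}^{n} \beta_j c_j^* c_j, \qquad Z := \prod_{j=1}^n (1+e^{-\beta_j}).
\end{equation*}
The CAR relation $[c_j^* c_j, c_k^\epsilon] = (2\epsilon-1)\delta_{jk} c_k^\epsilon$ yields $\rho c_j^\epsilon = e^{(1-2\epsilon)\beta_j} c_j^\epsilon \rho$, and combined with cyclicity this gives the KMS-type identity $\langle c_j^\epsilon A\rangle_\rho = e^{(1-2\epsilon)\beta_j}\langle A c_j^\epsilon\rangle_\rho$ for every $A \in \mathcal{B}(\mathcal{H})$.

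\smallskip

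\noindent\textbf{Inductive Wick expansion.} By bilinearity in $(f,g)$, (\ref{quasifree}) for $\rho^{(diag)}$ reduces to the same identity for arbitrary products $D_j := c_{i_j}^{\epsilon_j}$. I proceed by induction on $m$. Writing $D := c_{i_m}^{\epsilon_m}$ and $B := c_{i_{m-1}}^{\epsilon_{m-1}} \cdots c_{i_1}^{\epsilon_1}$, the KMS relation gives $\langle DB\rangle_\rho = e^{(1-2\epsilon_m)\beta_{i_m}}\langle BD\rangle_\rho$, and expanding $BD$ by CAR,
\begin{equation*}
BD = (-1)^{m-1} DB + \sum_{k=1}^{m-1} (-1)^{m-1-k} \{c_{i_k}^{\epsilon_k},D\}\, B_{\hat k},
\end{equation*}
solving for $\langle DB\rangle_\rho$ yields, in the even case,
\begin{equation*}
\langle DB\rangle_\rho = \sum_{k=1}^{m-1} (-1)^{m-1-k} \langle D\,c_{i_k}^{\epsilon_k}\rangle_\rho\, \langle B_{\hat k}\rangle_\rho,
\end{equation*}
which is a Pfaffian row-expansion equivalent to (\ref{quasifree}). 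For $m$ odd, the same manipulation yields a prefactor $(1 - e^{(1-2\epsilon_m)\beta_{i_m}})^{-1}$, and each $B_{\hat k}$ is an odd-length product, so vanishes by induction (base case $m=1$: $\langle c_j^\epsilon\rangle_\rho = 0$, which is the Part~1 calculation). The correlators are polynomials in the $\eta_j$, so the identity extends by continuity to the boundary cases $\eta_j \in \{0,1\}$.

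\smallskip

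\noindent\textbf{Main obstacle.} The crux of the argument is the identification, valid in each of the two nontrivial sub-cases $(\epsilon_m,\epsilon_k) = (0,1)$ and $(1,0)$ with $i_k = i_m$, of the thermal prefactor $e^{(1-2\epsilon_m)\beta_{i_m}}/(1 + e^{(1-2\epsilon_m)\beta_{i_m}})$ with the two-point function $\langle D\, c_{i_k}^{\epsilon_k}\rangle_\rho = 1-\eta_{i_m}$ or $\eta_{i_m}$ computed in Part~1. Once this identification and the sign conventions in the CAR expansion are in hand, the rest is routine induction and a continuity extension.
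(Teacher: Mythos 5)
Your argument is correct and is essentially the paper's own proof: the paper likewise derives the intertwining relation $c_k\rho^{(diag)}=\frac{\eta_k}{1-\eta_k}\,\rho^{(diag)}c_k$ (your KMS identity, obtained there without writing $\rho^{(diag)}$ as a Gibbs state), cycles one factor around the trace, anticommutes it back through the product, solves for the expectation, and identifies the resulting thermal prefactor with the two-point function exactly as you do; your choices of expanding along the leftmost rather than the rightmost factor and of working with monomials $c_{i_j}^{\epsilon_j}$ rather than the smeared operators $C(f,g)$ are immaterial. One small repair is needed in your odd-$m$ step: the prefactor $(1-e^{(1-2\epsilon_m)\beta_{i_m}})^{-1}$ is undefined when $\eta_{i_m}=\tfrac{1}{2}$, so you should either extend your polynomial-in-$\eta$ continuity remark to cover $\eta_j=\tfrac{1}{2}$ in addition to the boundary values $\{0,1\}$, or dispose of odd products directly by locality as the paper does (an odd monomial in the $c_j^{\#}$ factors, up to sign, into a tensor product containing a lone $a_{j_0}$ or $a_{j_0}^{*}$ on some site, whose trace against the diagonal one-site factor vanishes).
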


The strategy of the following proof is essentially the `classical' argument for the proof of Wick's rule for thermal states in free Fermion systems, e.g.\ \cite{BratRob}.

\begin{proof}
We first note that the formulas (\ref{prodcormatrix}) and (\ref{prodentropy}) follow from explicit calculations.
For the rest of the proof we will drop the superscript $(diag)$.
For odd values of $m$ in formula (\ref{quasifree}), the result follows from proving
\begin{equation}\label{eq:prodc=0}
\left\langle \prod_{j=1}^m c_{r_j}^{\#_{r_j}}\right\rangle_\rho=0,
\end{equation}
where the symbols $\#_{r_j}$ stand for $\#$ or nothing.
We will assume that the product of $c^{\#}$'s is not zero, otherwise there is nothing to prove. Since
\begin{equation}
a a^* a=a, \ \ a^* a a^*=a^*,\ \ \text{and}\ \ a^\#\sigma^Z= \pm a^\#
\end{equation}
it is easy to see that
\begin{equation}
\prod_{j=1}^m c_{r_j}^{\#_{r_j}}= \pm \bigotimes_{j=1}^n A_j,
\end{equation}
where $A_j\in\{a_j,a_j^*,a_ja_j^*, a_j^*a_j,\sigma^Z_j,\I\}$, and since $m$ is odd there exists $j_0\in\{1,2,\ldots,n\}$ such that $A_{j_0}\in\{a_{j_0},a_{j_0}^*\}$. Then
\begin{equation}
\Tr \prod_{j=1}^m c_{r_j}^{\#_{r_j}} \rho= \pm \prod_{j=1}^n \left(\tr A_j \begin{pmatrix}
                                                               \eta_j & 0 \\
                                                               0 & 1-\eta_j \\
                                                             \end{pmatrix}\right),
\end{equation}
which vanishes because
\begin{equation}
\tr a^{\#}\begin{pmatrix}
            \eta_{j_0} & 0 \\
            0 & 1-\eta_{j_0} \\
          \end{pmatrix} =0.
\end{equation}
The proof for even $m$ is more involved. First, assume that $\eta_j\notin \{0,1\}$ for all $1\leq j\leq n$. Note that
\begin{equation}
c_k \rho=\frac{\eta_k}{1-\eta_k}\rho c_k,
\end{equation}
because
\begin{eqnarray}
\sigma^Z\begin{pmatrix}
  \eta_j & 0 \\
  0 & 1-\eta_j \\
\end{pmatrix} &=&\begin{pmatrix}
                     \eta_j & 0 \\
                     0 & -(1-\eta_j) \\
                   \end{pmatrix}=\begin{pmatrix}
                                       \eta_j & 0 \\
                                        0 & 1-\eta_j \\
                                  \end{pmatrix}\sigma^Z, \\
a\begin{pmatrix}
       \eta_k & 0 \\
       0 & 1-\eta_k \\
     \end{pmatrix}&=&\begin{pmatrix}
                     0 & 0 \\
                     \eta_k & 0 \\
                   \end{pmatrix}=\frac{\eta_k}{1-\eta_k}\begin{pmatrix}
                                                            \eta_k & 0 \\
                                                            0 & 1-\eta_k \\
                                                          \end{pmatrix}a\,.
\end{eqnarray}
Also,
\begin{equation} \label{cfequ}
c(f)\rho=\sum_{k=1}^n \overline{{f}_k} c_k\rho=\rho \sum_{k=1}^n \overline{{f}_k} \frac{\eta_k}{1-\eta_k}c_k = \rho\,c(D_{\xi} f),
\end{equation}
where
\begin{equation}
D_\xi := \diag\left\{\xi_j: j=1,2,\ldots,n\right\}, \quad \xi_j := \frac{\eta_j}{1-\eta_j}.
\end{equation}
Similarly,
\begin{equation}
c^*(g)\rho=\rho \,c^*(D^{-1}_\xi g).
\end{equation}
The CAR imply
\begin{equation} \label{cfcommutator}
\{ c(f), c(g)\} = 0, \quad \{c(f), c^*(g)\} = \langle f, g \rangle_{\ell^2}.
\end{equation}
With $C_j := C(f_j,g_j)$, $j=1,\ldots,n$, write
\begin{equation} \label{eq:Wicks:mainproof:firstStep}
\left\langle\prod_{j=1}^m C_j\right\rangle_{\rho}= \left\langle\left(\prod_{j=2}^{m} C_j\right)c(f_1)\right\rangle_{\rho}+\left\langle\left(\prod_{j=2}^{m} C_j\right)c^*(g_1)\right\rangle_{\rho}
\end{equation}
For the first term we calculate, using (\ref{cfequ}) and cyclicity,
\begin{eqnarray} \label{itcom}
\left\langle\left(\prod_{j=2}^{m} C_j\right)c(f_1)\right\rangle_{\rho}
&=&\left\langle c(D_\xi f_1) \prod_{j=2}^{m} C_j \right\rangle_{\rho}
=\left\langle c(D_{\xi}f_1)C_m \prod_{j=2}^{m-1}C_j \right\rangle_{\rho}  \\
&=& \left\langle\left\{c(D_\xi f_1),C_m\right\} \prod_{j=2}^{m-1} C_j \right\rangle_{\rho}-\left\langle C_m c(D_\xi f_1) \prod_{j=2}^{m-1} C_j \right\rangle_{\rho} \nonumber \\
&=&\langle D_\xi f_1,g_m\rangle_{\ell^2}\left\langle\prod_{j=2}^{m-1}C_j\right\rangle_{\rho}-\left\langle C_m c(D_\xi f_1) \prod_{j=2}^{m-1} C_j \right\rangle_{\rho}, \nonumber
\end{eqnarray}
where also (\ref{cfcommutator}) was used. We proceed by applying this argument iteratively to the second term in (\ref{itcom}), commuting $c(D_{\xi} f)$ with each of the $C_j$, and eventually conclude that (\ref{itcom}) coincides with
\begin{equation}
\sum_{k=2}^m (-1)^k\langle D_\xi f_1, g_k\rangle_{\ell^2}\left\langle\prod_{\tiny{\begin{array}{c}
                                                                                         j=2 \\
                                                                                         j\neq k
                                                                                       \end{array}
}}^{m}C_j\right\rangle_{\rho}-\left\langle\left(\prod_{j=2}^m C_j
\right)c(D_{\xi}f_1)\right\rangle_{\rho}
\end{equation}
 Defining $\tilde{f}_1:=(\I+D_\xi)f_1$, the outcome of this calculation can be rewritten as
\begin{equation} \label{eq:Wicks:mainproof:firstterm}
\left\langle\left(\prod_{j=1}^m C_j\right)c(\tilde{f}_1)\right\rangle_{\rho}=\sum_{k=2}^m (-1)^k
\langle D_\xi(\I+D_\xi)^{-1} \tilde{f}_1, g_k\rangle_{\ell^2}\left\langle\prod_{\tiny{\begin{array}{c}
                                                                                          j=2 \\
                                                                                          j\neq k
                                                                                        \end{array}
}}^{m}C_j\right\rangle_{\rho}
\end{equation}
Now
\begin{equation}
\langle D_\xi(\I+D_\xi)^{-1} \tilde{f}_1, g_k\rangle_{\ell^2} = \langle D_\eta \tilde{f}_1, g_k \rangle, \ \text{where}\ \ D_{\eta}:=\diag\{\eta_j,j=1,2,\ldots,n\}
\end{equation}
and, as we can explicitly check that $\langle c_j^* c_j \rangle_{\rho} = \eta_j$, the latter coincides with
\begin{eqnarray}
\left\langle C_k\, c(\tilde{f}_1)\right\rangle_\rho & = & \langle c^*(g_k)c(\tilde{f}_1)\rangle_{\rho} \\
& = & \sum_{j=1}^n {\overline{\tilde{f}_1}_j}{g_k}_j \langle c_j^* c_j\rangle_{\rho}= \langle D_\eta \tilde{f}_1, g_k \rangle. \nonumber
\end{eqnarray}
Thus equation (\ref{eq:Wicks:mainproof:firstterm}) becomes
\begin{equation}\label{eq:Wicks:final:1Term}
\left\langle\left(\prod_{j=1}^m C_j\right)c(\tilde{f}_1)\right\rangle_{\rho}=\sum_{k=2}^m (-1)^k
\langle  C_k \,c(\tilde{f}_1) \rangle_{\rho}\left\langle\prod_{\tiny{\begin{array}{c}
                                                                                          j=2 \\
                                                                                          j\neq k
                                                                                        \end{array}
}}^{m}C_j\right\rangle_{\rho}
\end{equation}
By applying similar steps (which we omit) to the second term of (\ref{eq:Wicks:mainproof:firstStep}), and introducing $\tilde{g}_1 = (\I + D_{\xi}^{-1})g_1$ in the process, we get
\begin{equation} \label{eq:Wicks:final:2Term}
\left\langle\left(\prod_{j=1}^m C_j\right)c^*(\tilde{g}_1)\right\rangle_{\rho}=\sum_{k=2}^m (-1)^k
\langle  C_k \, c^*(\tilde{g}_1) \rangle_{\rho}\left\langle\prod_{\tiny{\begin{array}{c}
                                                                                          j=2 \\
                                                                                          j\neq k
                                                                                        \end{array}
}}^{m}C_j\right\rangle_{\rho}
\end{equation}
By substituting the results (\ref{eq:Wicks:final:1Term}) and (\ref{eq:Wicks:final:2Term}) into (\ref{eq:Wicks:mainproof:firstStep}), we obtain Wick's rule
\begin{equation}
\left\langle\prod_{j=1}^m C_j\right\rangle_{\rho}=\sum_{j=2}^m (-1)^k \langle C_k C_1\rangle_{\rho}
\left\langle\prod_{\tiny{\begin{array}{c}
                           j=2 \\
                           j\neq k \\
                         \end{array}
}}^m C_j\right\rangle_{\rho}
\end{equation}
for the case of even $m$ (after renaming $\tilde{f}_j$ and $\tilde{g}_j$ as $f_j$ and $g_j$).

Finally, in the general case where $\eta_j\in [0,1]$ for all $j=1,2,\ldots,n$, there exists a sequence
\begin{equation}
\rho_n=\bigotimes_{j=1}^n \begin{pmatrix}
                          \eta_j^{(n)} & 0 \\
                          0 & 1-\eta_j^{(n)} \\
                        \end{pmatrix}
\end{equation}
where $\eta_j^{(n)} \notin\{0,1\}\rightarrow \eta_j$ as $n\rightarrow \infty$ and thus $\rho_n\rightarrow \rho$. Now the fact that $\rho_n$ is quasi free with respect to $\mathcal{C}$ carries over to the limit.
\end{proof}

We can now prove the following fundamental relation, which was stated earlier as the second part of Lemma~\ref{lem:quasifree}.

\begin{lem}\label{lem:corr:symspectrum}
Let $\mathcal{D}$ be a Fermionic system and $\rho$ is self adjoint on $\mathcal{B}(\mathcal{H})$, then the correlation matrix $\Gamma_{\rho}^{\mathcal{C}}$ has a symmetric spectrum around $\frac{1}{2}$ and it is diagonalizable by a Bogoliubov Matrix.
\end{lem}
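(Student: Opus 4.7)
The plan is to extract the required symmetry from the CAR identity used implicitly in the proof of Lemma~\ref{thm:Bogo:transf}. Taking $\rho$-expectations of $\mathcal{D}\mathcal{D}^* + J(\mathcal{D}\mathcal{D}^*)^t J = \I_{2n}$ gives
$$\Gamma + J \Gamma^t J = \I_{2n}, \qquad \Gamma := \Gamma_{\rho}^{\mathcal{D}}.$$
Self-adjointness of $\rho$ makes $\Gamma$ self-adjoint, so $\Gamma^t = \overline{\Gamma}$ and hence $\Gamma = \I_{2n} - J\overline{\Gamma}J$. Now consider the anti-unitary involution $\iota: \C^{2n}\to\C^{2n}$, $\iota(v):=J\bar{v}$ (anti-linear, isometric, with $\iota^2=\mathrm{id}$ since $J$ is real orthogonal). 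If $\Gamma v = \lambda v$, then using $\bar\lambda = \lambda$ one computes
$$\Gamma\iota(v) = \iota(v) - J\overline{\Gamma v} = (1-\lambda)\iota(v),$$
so $\iota$ exchanges $E(\lambda) \leftrightarrow E(1-\lambda)$; this yields the spectral symmetry about $1/2$.

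For the Bogoliubov diagonalization, I would assemble the columns of $W^*$ out of orthonormal eigenvectors matched in $\iota$-pairs. For each eigenvalue $\mu<1/2$, choose any orthonormal basis $\{u_j\}$ of $E(\mu)$; the images $\{\iota(u_j)\}$ then form an orthonormal basis of $E(1-\mu)$, and these two collections lie in eigenspaces for distinct eigenvalues, hence are mutually orthogonal. The delicate case is $\mu=1/2$. Its eigenspace $V=E(1/2)$ has even dimension $2k$ (the orthogonal complement splits into an even-dimensional sum of $\iota$-paired subspaces and $\dim \C^{2n}=2n$ is even), and $\iota$ restricts to an anti-unitary involution on $V$. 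On $V$ introduce the complex bilinear form
$$B(v,w):=\langle\iota(v),w\rangle,$$
which is bilinear because the anti-linearity of $\iota$ cancels the conjugate-linearity in the first slot of the inner product, symmetric by the anti-unitarity of $\iota$ together with $\iota^2=\mathrm{id}$, and non-degenerate because $\iota$ is injective. By the Witt decomposition for non-degenerate complex symmetric forms on an even-dimensional space, $B$ admits a maximal isotropic subspace $L\subset V$ with $\dim L = k$. On $L$, $B$ vanishing reads $\iota(L)\perp L$, so $V = L\oplus\iota(L)$ as a Hilbert-space direct sum, and any orthonormal basis of $L$ supplies the remaining eigenvectors.

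Listing all chosen eigenvectors in the order $u_1,\iota(u_1),u_2,\iota(u_2),\ldots,u_n,\iota(u_n)$ as the columns of $W^*$ gives a unitary $W$ with
$$W\Gamma W^* = \bigoplus_{j=1}^n \diag(\mu_j,\, 1-\mu_j).$$
The Bogoliubov condition $WJW^t = J$ then reduces to a block computation: the $(i,j)$-th $2\times 2$ block of $WJW^t$ equals
$$\begin{pmatrix} \langle u_i, \iota(u_j)\rangle & \langle u_i, u_j\rangle \\ \overline{\langle u_i, u_j\rangle} & \overline{\langle u_i, \iota(u_j)\rangle}\end{pmatrix},$$
which matches $\delta_{ij}\,\sigma^X$ precisely by our orthonormality $\langle u_i,u_j\rangle=\delta_{ij}$ and $\iota$-orthogonality $\langle u_i,\iota(u_j)\rangle=0$. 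The main obstacle is the $\mu=1/2$ step: for every other eigenvalue the pairing by $\iota$ is automatic from distinctness of eigenvalues, whereas within $E(1/2)$ it is the existence of a Lagrangian subspace for $B$ (equivalently, of an orthonormal half-basis whose $\iota$-images are orthogonal to it) that makes the construction possible.
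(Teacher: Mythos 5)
Your proof is correct, and it takes a genuinely different route from the paper's. The paper passes to the Majorana picture: it conjugates $T=2\Gamma-\I$ by the unitary $\Omega=\tfrac{1}{\sqrt 2}\bigl(\begin{smallmatrix}1&1\\ -i&i\end{smallmatrix}\bigr)^{\oplus n}$, observes (using the explicit structure (\ref{cmstructure}) of correlation matrices) that $-i\Omega T\Omega^*$ is a \emph{real antisymmetric} matrix, invokes the canonical $2\times2$-block normal form of real antisymmetric matrices under the orthogonal group, and conjugates back, checking the Bogoliubov condition (\ref{eq:Bogo:WJ}) for the resulting $W$ at the end. You instead extract the particle--hole relation $\Gamma=\I_{2n}-J\overline{\Gamma}J$ directly from the $\rho$-expectation of the CAR identity, which packages the whole symmetry into the antiunitary involution $\iota(v)=J\bar v$ intertwining $E(\lambda)$ and $E(1-\lambda)$; the spectral symmetry about $\tfrac12$ is then immediate, and the Bogoliubov matrix is assembled from $\iota$-paired orthonormal eigenvectors, with your final block computation correctly verifying $WJW^t=J$. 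The one delicate point in your construction --- splitting $E(1/2)$ as $L\oplus\iota(L)$ via a Lagrangian subspace of the complex symmetric form $B(v,w)=\langle\iota(v),w\rangle$ --- is handled correctly (the parity argument for $\dim E(1/2)$, the non-degeneracy of $B$ on $E(1/2)$, and the Witt-index count are all sound); in the paper's approach this same difficulty is absorbed into the normal-form theorem, since eigenvalue $\tfrac12$ of $\Gamma$ corresponds to the kernel ($\lambda_j=0$) of the antisymmetric matrix. What each approach buys: the paper's is shorter because it outsources the linear algebra to a standard theorem and exploits the explicit real structure of $\Gamma$; yours is basis-free, needs only self-adjointness of $\Gamma$ plus the CAR, and makes the underlying particle--hole symmetry conceptually explicit.
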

\begin{proof}
It is enough to show that
\begin{equation}\label{eq:corr:pf:Tdef}
T:=2\Gamma_{\rho}^{\mathcal{D}}-\I
\end{equation}
is diagonalizable by a Bogoliubov matrix. Let us define the unitary matrix $\Omega:=\frac{1}{\sqrt{2}}\begin{pmatrix}
         1 & 1 \\
         -i  & i \\
       \end{pmatrix}^{\oplus n}$.
One can check easily that $-i\Omega T \Omega^*=:\Gamma$
is real anti-symmetric matrix, i.e. $\Gamma=-\Gamma^t$.
From the spectral theory of anti-symmetric matrices, there exists an orthogonal matrix $O$ such that
\begin{equation}
\Gamma=O \bigoplus_{j=1}^n \begin{pmatrix}
                             0 &\lambda_j   \\
                              -\lambda_j & 0\\
                           \end{pmatrix} O^t,
\end{equation}
where $\lambda_j\geq 0$ for $j=1,2,\ldots,n$. By diagonalizing we get
\begin{equation}
\Omega T \Omega^*=i \Gamma
= \frac{1}{2}O  \begin{pmatrix}
                                    i & -i \\
                                    1 & 1 \\
                                  \end{pmatrix}^{\oplus n}
                                  \bigoplus_{j=1}^n \begin{pmatrix}
                                                       \lambda_j & 0 \\
                                                       0 & -\lambda_j \\
                                                     \end{pmatrix}
                                  \begin{pmatrix}
                                    -i & 1 \\
                                    i & 1 \\
                                  \end{pmatrix}^{\oplus n}O^t.
\end{equation}
Thus
\begin{equation}\label{eq:corr:pf:T}
T=W \begin{pmatrix}
                      \lambda_j & 0 \\
                      0 & -\lambda_j \\
                    \end{pmatrix}^{\oplus n} W^* \ \text{where}\ W=\frac{1}{\sqrt{2}}\Omega^* O  \begin{pmatrix}
      i & -i\\
       1 & 1 \\
     \end{pmatrix}^{\oplus n}.
\end{equation}
Note that $W$ is unitary
and one can easily check that (\ref{eq:Bogo:WJ}) is satisfied.
This proves that $W$ is a Bogoliubov matrix. Then, (\ref{eq:corr:pf:Tdef}) and (\ref{eq:corr:pf:T}) imply that
\begin{equation}
\Gamma_{\rho}^{\mathcal{D}}=W \bigoplus_{j=1}^n \begin{pmatrix}
                                                  \frac{1+\lambda_j}{2} & 0 \\
                                                  0 & \frac{1-\lambda_j}{2} \\
                                                \end{pmatrix}
W^*
\end{equation}
which shows that the spectrum is symmetric around $\frac{1}{2}$.
\end{proof}

\begin{proposition}\label{thm:App:main}
Let $\mathcal{D}$ be a Fermionic system in $\mathcal{H}$ and $\rho$ a state in $\mathcal{H}$ which is quasi free with respect to $\mathcal{D}$. Then
\begin{equation} \label{eq:App:main}
\Tr \rho\log\rho=\tr \Gamma^{\mathcal{D}}_{\rho}\log \Gamma^{\mathcal{D}}_{\rho}.
\end{equation}
\end{proposition}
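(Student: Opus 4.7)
My plan is to reduce the general quasi free case to the diagonal product state case of Proposition~\ref{thm:Wicks:generalmainthm} via a Bogoliubov transformation, exploiting the fact that both the left and right hand sides of (\ref{eq:App:main}) depend only on the spectrum of $\rho$ (up to unitary conjugation of $\rho$) and on the spectrum of $\Gamma_\rho^{\mathcal{D}}$ (invariant under Bogoliubov conjugation).

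The first step is to diagonalize the correlation matrix. By Lemma~\ref{lem:corr:symspectrum}, there exists a Bogoliubov matrix $W$ such that
\begin{equation}
W^* \Gamma_\rho^{\mathcal{D}} W \;=\; \bigoplus_{j=1}^n \begin{pmatrix} 1-\eta_j & 0 \\ 0 & \eta_j \end{pmatrix}
\end{equation}
with $\eta_j \in [0,1]$ (after arranging the eigenvalues $(1\pm\lambda_j)/2$ in pairs within each $2\times 2$ block so as to match the structure (\ref{cmstructure}); since Bogoliubov matrices form a group, $W^*$ is also Bogoliubov). Set $\tilde{\mathcal{D}}:=W^*\mathcal{D}$. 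By Lemma~\ref{thm:Bogo:transf}, $\tilde{\mathcal{D}}$ is a Fermionic system and $\Gamma_\rho^{\tilde{\mathcal{D}}}=W^*\Gamma_\rho^{\mathcal{D}}W$ is the diagonal matrix above. Moreover, by Lemma~\ref{lemma:BogoWicks1}(b), $\rho$ remains quasi free with respect to $\tilde{\mathcal{D}}$.

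Next, on the reference Hilbert space $\bigotimes_{j=1}^n \C^2$ with Jordan-Wigner Fermionic system $\mathcal{C}$, consider the diagonal product state $\rho^{(diag)}$ of (\ref{eq:App:rhodiagonal}) with the same parameters $\eta_j$. By Proposition~\ref{thm:Wicks:generalmainthm} it is quasi free with respect to $\mathcal{C}$, satisfies
\begin{equation}
\Gamma_{\rho^{(diag)}}^{\mathcal{C}} \;=\; \bigoplus_{j=1}^n \begin{pmatrix} 1-\eta_j & 0 \\ 0 & \eta_j \end{pmatrix} \;=\; \Gamma_\rho^{\tilde{\mathcal{D}}},
\end{equation}
and obeys the trace identity (\ref{prodentropy}). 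Since $\rho$ (w.r.t.\ $\tilde{\mathcal{D}}$) and $\rho^{(diag)}$ (w.r.t.\ $\mathcal{C}$) are two quasi free states with identical correlation matrices, Lemma~\ref{thm:App:corrBogo:1} provides a unitary $U$ with $\rho = U\rho^{(diag)} U^*$.

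Unitary invariance of the von Neumann entropy then yields $\Tr\rho\log\rho = \Tr\rho^{(diag)}\log\rho^{(diag)}$, which by (\ref{prodentropy}) equals $\tr \Gamma_{\rho^{(diag)}}^{\mathcal{C}}\log\Gamma_{\rho^{(diag)}}^{\mathcal{C}} = \tr\Gamma_\rho^{\tilde{\mathcal{D}}}\log\Gamma_\rho^{\tilde{\mathcal{D}}}$. Finally, conjugation of $\Gamma_\rho^{\mathcal{D}}$ by the unitary $W$ preserves its spectrum, so $\tr\Gamma_\rho^{\tilde{\mathcal{D}}}\log\Gamma_\rho^{\tilde{\mathcal{D}}} = \tr\Gamma_\rho^{\mathcal{D}}\log\Gamma_\rho^{\mathcal{D}}$, completing the proof. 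The only real subtlety is the bookkeeping in the first step, where one must verify that the diagonalization produced by Lemma~\ref{lem:corr:symspectrum} can be arranged so that the resulting diagonal matrix respects the structural form (\ref{cmstructure}) of a correlation matrix for a Fermionic system; this is the point that requires a moment of care, but follows from the freedom to reorder eigenvalues within each symmetric pair $\{\eta_j, 1-\eta_j\}$.
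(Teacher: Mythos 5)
Your proposal is correct and follows essentially the same route as the paper's proof: diagonalize $\Gamma_\rho^{\mathcal{D}}$ by a Bogoliubov matrix via Lemma~\ref{lem:corr:symspectrum}, match it with the diagonal product state of Proposition~\ref{thm:Wicks:generalmainthm}, invoke Lemma~\ref{thm:App:corrBogo:1} for unitary equivalence, and conclude by unitary invariance of both traces. The only point the paper treats more explicitly is why $\eta_j\in[0,1]$ (it identifies $\eta_j=\Tr \tilde d_j^*\tilde d_j\,\rho$ with $\tilde d_j^*\tilde d_j$ an orthogonal projection), which you assert but do not fully justify; this is needed for Proposition~\ref{thm:Wicks:generalmainthm} to apply.
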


\begin{proof}
Lemma \ref{lem:corr:symspectrum} implies that there exists a Bogoliubov matrix $W$ and real numbers $\eta_j$ such that
\begin{equation}
W^* \Gamma_{\rho}^{\mathcal{D}}W=\bigoplus_{j=1}^n\begin{pmatrix}
                                                    1-\eta_j & 0 \\
                                                    0 & \eta_j \\
                                                  \end{pmatrix}
=: D_{\eta}
\end{equation}
In fact we have $\eta_j \in [0,1]$: For the Fermionic system $\tilde{\mathcal{D}} = W \mathcal{D}$, we get by (\ref{cormattransfer}) that $\Gamma_{\rho}^{\tilde{\mathcal{D}}} = D_{\eta}$ and thus $\eta_j = \Tr \tilde{d}_j^* \tilde{d}_j \rho \in [0,1]$ because $\tilde{d}_j^* \tilde{d}_j$ is an orthogonal projection.

Consider the product state $\rho^{(diag)}$ of Proposition~\ref{thm:Wicks:generalmainthm} with this choice of the numbers $\eta_j$. An explicit calculation shows $\Gamma_{\rho^{(diag)}}^{\mathcal{C}} = D_{\eta}$, which is therefore equal to $\Gamma_{\rho}^{\tilde{\mathcal{D}}}$. Also, $\rho^{(diag)}$ is quasi free with respect to $\mathcal{C}$ by Proposition~\ref{thm:Wicks:generalmainthm} and $\rho$ is quasi free with respect to $\tilde{\mathcal{D}}$ by assumption and Lemma~\ref{lemma:BogoWicks1}(b). Thus, using Lemma~\ref{thm:App:corrBogo:1}, we get that $\rho$ and $\rho^{(diag)}$ are unitarily equivalent. As we have also shown that $\Gamma_{\rho}^{\mathcal{D}}$ is unitarily equivalent to $\Gamma_{\rho^{(diag)}}^{\mathcal{C}}$, (\ref{eq:App:main}) now follows from (\ref{prodentropy}).
\end{proof}

\bigskip




\end{document}